\DeclareMathAlphabet{\mathpzc}{OT1}{pzc}{m}{it}
\newcounter{cc}
\definecolor{awesome}{rgb}{1.0, 0.13, 0.32}
\newcommand{\Rmnum}[1]{\expandafter\@slowromancap\romannumeral #1@}
\DeclareMathOperator{\E}{\mathbb{E}}
\DeclareMathOperator{\Tr}{Tr}
\DeclareMathOperator*{\argmin}{\arg\!\min}
\newtheorem{proposition}{Proposition}
\begin{document}

\title{ MmWave Amplify-and-Forward MIMO Relay Networks with Hybrid Precoding/Combining Design}
 \author {\IEEEauthorblockN{Lisi Jiang, Hamid Jafarkhani}
\thanks{ The authors are with the Center for Pervasive Communications and Computing, University of California, Irvine (email: \{lisi.jiang, hamidj\}@uci.edu). This work was supported in part by the NSF Award ECCS-1642536. This paper was presented in part at the IEEE International Conference on Communications 2019 \cite{jiang2019hybrid}.}\\\
 }
\maketitle
\thispagestyle{plain}
\pagestyle{plain}

\begin{abstract}
In this paper, we consider the amplify-and-forward relay networks in mmWave systems and propose a hybrid precoder/combiner design approach. The phase-only RF precoding/combining matrices are first designed to  support multi-stream transmission, where we compensate the phase for the eigenmodes of the channel. Then, the baseband precoders/combiners are performed to achieve the maximum mutual information. Based on the data processing inequality for the mutual information, we first jointly design the baseband source and relay nodes to maximize the mutual information before the destination baseband receiver. The proposed  low-complexity iterative algorithm for the source and relay nodes is based on the  equivalence between mutual information maximization and the weighted MMSE. After we obtain the optimal precoder and combiner for the source and relay nodes, we implement the MMSE-SIC filter at the baseband receiver to keep the mutual information unchanged, thus obtaining the optimal mutual information for the whole relay system. Simulation results show that our algorithm achieves better performance with lower complexity compared with other algorithms in the literature. In addition, we also propose a
robust joint transceiver design for imperfect channel state information.
 
\end{abstract}


\IEEEpeerreviewmaketitle
\vspace{-0.2cm}
\section{Introduction}
Communications over millimeter wave (mmWave) has received significant attention recently because of the high data rates provided by the large bandwidth at the mmWave carrier frequencies. Also, using large antenna arrays in mmWave communication systems is possible because the small wavelength  allows integrating many antennas in a small area. Despite its advantages, the mmWave carrier frequencies suffer from relatively severe propagation losses. Meanwhile, the sparsity of the mmWave scattering environment usually results in rank-deficient channels \cite{mmwave_book}. 

To overcome the large path losses, large antenna arrays can be placed at both transmitters and receivers to guarantee sufficient received signal power \cite{intro_mmWave}. The large antenna arrays lead to a large number of radio frequency (RF) chains, which greatly increase the implementation cost and complexity. To reduce the number of RF chains, hybrid analog/digital precoding has been proposed, which connects analog phase shifters with a reduced number of RF chains. The main advantage of the hybrid precoding is that it can trade off between the low-complexity limited-performance analog phase shifters and the high-complexity good-performance digital precoding \cite{han2015large}.

Despite the help of large antenna arrays, the severe propagation losses still limit mmWave communications to take place within short ranges. Fortunately, the coverage can be greatly extended with the help of relay nodes \cite{laneman_relay}. Therefore, investigating the performance of hybrid precoding/combining in the relay scenario is important. For the conventional relay scenario, network beamforming in  amplify-and-forward (AF) relay networks was studied in \cite{jing2008network,jing2009network}. 

For a mmWave relay scenario, large antenna arrays are usually implemented to mitigate the severe path loss. In addition, a hybrid precoding method is adopted. There are  two typical hybrid precoding structures: (i) fully-connected structure (where each RF chain is connected to all antennas) \cite{el2014spatially}, and (ii) sub-connected structure (where each RF chain is connected to a subset of antennas) \cite{gao2018low}.  For fully-connected mmWave networks with AF relay nodes, the authors in \cite{single_relay_mmwave} designed hybrid precoding matrices using the orthogonal matching pursuit (OMP) algorithm. However, the performance of the OMP algorithm used in \cite{single_relay_mmwave} depends on the orthogonality of the pre-determined candidates for the analog precoders. In \cite{xue2016joint}, a joint source and relay precoding design for mmWave AF relay network is proposed based on semidefinite programming (SDP). However, the proposed method in \cite{xue2016joint} is only suitable for one data stream scenario. In \cite{sheu2017hybrid}, to reduce the complexity, the RF and the baseband (BB) are separately designed and a minimum mean squared error (MMSE)-based design for the BB filters is proposed. Although the algorithm in \cite{sheu2017hybrid} shows its advantage over the OMP algorithm in terms of sum spectral efficiency, it did not optimize the sum rate of the system. In fact, \cite{sheu2017hybrid} can be seen as a special case of our proposed methods since we minimize the weighted mean squared error. In \cite{tsinos2018hybrid}, an efficient algorithm is proposed via employing the so-called Alternating Direction Method of Multipliers (ADMM), which greatly reduce the distance between the hybrid precoder/combiner and the full-digital precoder/combiner. However, the ADMM algorithm has a high complexity and is sub-optimal in terms of the data rate for the system. For sub-connected structures, \cite{xue2018relay} proposes a MMSE-based relay hybrid precoding design. To make the problem tractable, \cite{xue2018relay} reformulates the original problem as three subproblems and proposes an iterative successive approximation (ISA) algorithm. The algorithm in \cite{xue2018relay} can also be extended to the fully-connected structure. Compared with the OMP algorithm, the ISA algorithm in \cite{xue2018relay} greatly improves the performance, however, the complexity of the ISA algorithm is high and it only optimizes the relay node.

In this paper, we study the hybrid precoding for fully-connected mmWave AF relay networks in the domain of massive multiple-input and multiple-output (MIMO) systems. To reduce the complexity, we separate the RF and the BB.  For the RF, we first design the phase-only RF precoding/combining matrices for multi-stream transmissions. 
We decompose the channel into parallel sub-channels through singular value decomposition (SVD) and compensate the phase of each sub-channel, i.e., each eigenmode of the channel. 
When the RF precoding and combining are performed, the digital baseband precoders/combiners are performed on the equivalent baseband channel to achieve the maximal mutual information. The problem of finding the optimal baseband precoders/combiners for the optimal mutual information is non-convex and intractable to solve by low complexity methods. Based on the data processing inequality for the mutual information \cite{gray2011entropy}, we first jointly design the baseband source and relay nodes to maximize the mutual information before the destination baseband receiver. We propose a low-complexity iterative algorithm to design the precoder and combiner for the source and relay nodes, which is based on the equivalence between mutual information maximization and the weighted MMSE \cite{christensen2008weighted}. After we obtain the optimal precoder and combiner for the source and relay nodes, we implement the MMSE successive interference cancellation (MMSE-SIC) filter \cite{tse2005fundamentals} at the baseband receiver to keep the mutual information unchanged, thus obtaining the optimal mutual information for the whole relay system. Simulation results show that our algorithm outperforms the OMP in \cite{single_relay_mmwave}. Moreover, our algorithm achieves better performance with lower complexity compared to the ISA algorithm in \cite{xue2018relay}.

We also propose a robust hybrid precoding/combining approach considering the inevitable imperfect channel state information (CSI) in the second part of the paper. Robust design for traditional relay systems has been well studied in papers, such as \cite{xing2010robust,xing2010transceiver,rong2011robust}. In \cite{koyuncu2012distributed,liu2016amplify}, the topic of imperfect channel state information in amplify-and-forward relay networks has been studied under amplify-and-forward relay networks with limited feedback. However, there is not much work on the effects of  imperfect channel state information in mmWave relay networks. In \cite{luo2018robust}, a robust OMP-based algorithm is proposed to maximize the receiving signal-to-noise ratio (SNR) at the destination node. Similar with the non-robust case, the performance of the OMP-based algorithm depends on the orthogonality of the predetermined candidates for the analog precoders. In this paper, we adopt the well-known Kronecker model \cite{zhang2008statistically,xing2010robust} for the CSI mismatch. We first estimate the phase for RF precoding/combining to minimize the average estimation error. Then, we modify our proposed weighted MMSE approach for the perfect CSI to achieve a more robust performance for the baseband processing. Simulation results demonstrate the robustness of the proposed algorithm against CSI mismatch.

The contributions of our paper can be summarized as follows: 
\begin{itemize}
\item We propose a hybrid precoding/combining approach for perfect CSI in mmWave relay systems. The phase-only RF precoding/combining matrices are first designed to achieve large array gains and support multi-stream transmissions. Then, we design the baseband processing system to achieve maximal mutual information by transforming the highly complicated non-convex mutual information maximization problem into an easily tractable weighted MMSE problem. An iterative algorithm which decouples the joint design into four sub-problems is developed. 
\item A robust design for the imperfect CSI is further proposed by modifying the non-robust precoding/combining design. To the best of our knowledge, this is the first mmWave relay system design that is robust to channel estimation error. Numerical results are provided to show the robustness of the proposed algorithm against CSI mismatch.
\end{itemize}

Compared with our conference version \cite{jiang2019hybrid} which designs the hybrid filters for perfect CSI, we analyze the impact of imperfect CSI in this paper, and further propose a robust design to strengthen the robustness of our proposed algorithm. The remaining sections are organized as follows. In Section \uppercase\expandafter{\romannumeral2} , we describe the system model and the mmWave channel model. Section \uppercase\expandafter{\romannumeral3} formulates the proposed hybrid precoding/combing approach for the perfect CSI. Section \uppercase\expandafter{\romannumeral4} presents the proposed robust hybrid design for the imperfect CSI. Numerical examples are presented and discussed in Section \uppercase\expandafter{\romannumeral5}.  We provide concluding remarks in Section \uppercase\expandafter{\romannumeral6}.

\emph{Notation:} $\mathbb{C}^{m\times n}$ is the set of all $m\times n$ complex matrices with $\mathbb{C}^m \triangleq \mathbb{C}^{m\times 1}$ and $\mathbb{C}\triangleq\mathbb{C}^1$. $\mathbb{I}_m$ is the $m\times m$ identity matrix, and $\mathbf{0}_{m\times n}$ is the $m\times n$ all-zero matrix. $\mathbb{CN}(\bm{\mu},\mathbf{K})$ is a circularly-symmetric complex Gaussian random vector with mean vector $\bm{\mu}$ and covariance matrix $\mathbf{K}$. Matrices $\mathbf{A}^T$ and $\mathbf{A}^{H}$ are the transpose and the Hermite transpose of matrix $\mathbf{A}$, respectively. Matrix $\mathbf{A}=[\bm{\alpha}_1, \bm{\alpha}_2,...,\bm{\alpha}_L]$ represents the concatenation of the $L$ vectors $\bm{\alpha}_i$, and $\mathbf{B}=[\mathbf{A}_{1},\mathbf{A}_{2},...,\mathbf{A}_{K}]$ represents the concatenation of the $K$ matrices $\mathbf{A}_i$.

\section{System Model}
In this section, we present the signal and channel model for a single user  mmWave MIMO relay system with large antenna arrays and limited RF chains.
\begin{figure*}[htp]
\setlength{\abovedisplayskip}{3pt}
\setlength{\belowdisplayskip}{3pt}
\centering
\includegraphics[width=14cm]{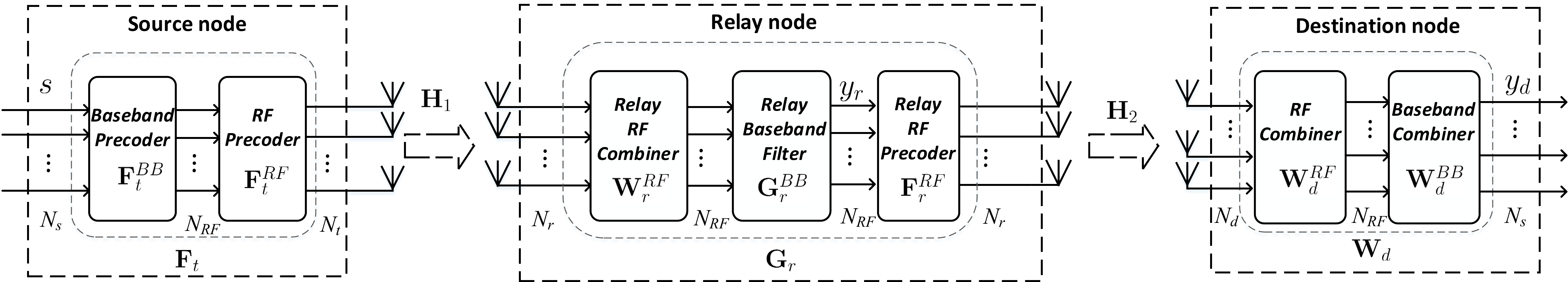}
\caption{System model}
\label{system model}
\end{figure*}
\subsection{System model}
	Consider a single-user mmWave MIMO relay system using hybrid precoding as illustrated in Fig. 1. The system consists of a source node with ${\text{N}_\text{t}}$ transmission antennas, a relay node with $\text{N}_\text{r}$ antennas for both transmitting and receiving signals, and  a destination node with $\text{N}_\text{d}$ antennas. Assuming $\text{N}_\text{s}$ data streams are transmitted, the BS is equipped with $\text{N}_\text{RF}$ RF chains such that $\text{N}_\text{s} \leq \text{N}_\text{RF} \leq \text{N}_\text{t}$. Using the {$\text{N}_\text{RF}$} transmit chains, an {$\text{N}_\text{RF} \times \text{N}_\text{s}$} baseband precoder {$\mathbf{F}_\text{t}^\text{BB}$} is applied. The RF precoder is an {$\text{N}_\text{t} \times \text{N}_\text{RF}$} matrix {$\mathbf{F}_\text{t}^\text{RF}$}.  Half duplex relaying is adopted. During the first time slot, the BS transmits the {$\text{N}_\text{s}$} data streams to the relay through a MIMO channel {$\mathbf{H}_1 \in \mathbb{C}^{\text{N}_\text{r} \times \text{N}_\text{t}}$}. The relay receives the signal with an RF combiner {$\mathbf{W}_\text{r}^\text{RF} \in \mathbb{C}^{\text{N}_\text{RF} \times \text{N}_\text{r}}$} and a baseband filter {$\mathbf{G}_\text{r}^\text{BB} \in \mathbb{C}^{\text{N}_\text{RF} \times \text{N}_\text{RF}}$}. During the second time slot, the relay transmits the data using one RF percoder {$\mathbf{F}_\text{r}^\text{RF}\in \mathbb{C}^{\text{N}_\text{r} \times \text{N}_\text{RF}}$} through a MIMO channel {$\mathbf{H}_2 \in \mathbb{C}^{\text{N}_\text{d} \times \text{N}_\text{r}}$} and the destination receives the data with one RF combiner {$\mathbf{W}_\text{d}^\text{RF} \in \mathbb{C}^{\text{N}_\text{RF} \times \text{N}_\text{d}}$} and one baseband combiner {$\mathbf{W}_\text{d}^\text{BB} \in \mathbb{C}^{\text{N}_\text{s} \times \text{N}_\text{RF}}$}. 

We assume the transmited signal is {$\bm{s}=[s^1, s^2,...,s^{\text{N}_\text{s}}]^T$}  with {$\E[\bm{s}\bm{s}^{H}]=\mathbf{I}_{\text{N}_\text{s}} \in \mathbb{C}^{\text{N}_\text{s} \times \text{N}_\text{s}}$}. During the first time slot, the received signal after the baseband filter at the relay can be expressed as
\begin{equation}
\label{1}
	\bm{y}_\text{r}=\mathbf{G}_\text{r}^\text{BB}\mathbf{W}_\text{r}^\text{RF}\mathbf{H}_1\mathbf{F}_\text{t}^\text{RF}\mathbf{F}_\text{t}^\text{BB}\bm{s}+\mathbf{G}_\text{r}^\text{BB}\mathbf{W}_\text{r}^\text{RF}\bm{n}_1,
\end{equation}
where {$ \bm{n}_1 \in \mathbb{C}^{\text{N}_\text{r} \times 1}$}  is a zero-mean complex Gaussian noise vector at the relay node with covariance matrix {$\E[\bm{n}_1\bm{n}_1^{H}]=\sigma_1^2\mathbf{I}_{\text{N}_\text{r}} \in \mathbb{C}^{\text{N}_\text{r} \times \text{N}_\text{r}}$}.
The power constraint at the source node is
\begin{equation}
\label{2}
\left\|\mathbf{F}_\text{t}^\text{RF}\mathbf{F}_\text{t}^\text{BB}\right\|^2_F\leq \text{E}_\text{t}.
\end{equation}
During the second time slot, the received signal after the combiners at the destination can be expressed as
\begin{equation}
\begin{split}
\label{4}
\bm{y}_\text{d}&=  \mathbf{W}_\text{d}^\text{BB}\mathbf{W}_\text{d}^\text{RF}\mathbf{H}_2\mathbf{F}_\text{r}^\text{RF}\mathbf{G}_\text{r}^\text{BB}\mathbf{W}_\text{r}^\text{RF}\mathbf{H}_1\mathbf{F}_\text{t}^\text{RF}\mathbf{F}_\text{t}^\text{BB}\bm{s}\\
&+\mathbf{W}_\text{d}^\text{BB}\mathbf{W}_\text{d}^\text{RF}\mathbf{H}_2\mathbf{F}_\text{r}^\text{RF}\mathbf{G}_\text{r}^\text{BB}\mathbf{W}_\text{r}^\text{RF}\bm{n}_1+\mathbf{W}_\text{d}^\text{BB}\mathbf{W}_\text{d}^\text{RF}\bm{n}_2,
\end{split}
\end{equation}
where {$ \bm{n}_2 \in \mathbb{C}^{\text{N}_\text{d} \times 1}$} is a zero-mean complex Gaussian noise vector at the destination node with covariance matrix {$\E[\bm{n}_2\bm{n}_2^{H}]=\sigma_2^2\mathbf{I}_{\text{N}_\text{d}} \in \mathbb{C}^{\text{N}_\text{d} \times \text{N}_\text{d}}$}.

To simplify the expression, we define {$\mathbf{F}_\text{t}=\mathbf{F}_\text{t}^\text{RF}\mathbf{F}_\text{t}^\text{BB} \in \mathbb{C}^{\text{N}_\text{t}\times \text{N}_\text{s}}$} as the hybrid precoding matrix at the transmitter, {$\mathbf{G}_\text{r}=\mathbf{W}_\text{r}^\text{RF}\mathbf{G}_\text{r}^\text{BB}\mathbf{F}_\text{r}^\text{BB} \in \mathbb{C}^{\text{N}_\text{r}\times \text{N}_\text{r}}$} as the hybrid filter at the relay node, and {$\mathbf{W}_\text{d}=\mathbf{W}_\text{d}^\text{BB}\mathbf{W}_\text{d}^\text{RF} \in \mathbb{C}^{\text{N}_\text{s}\times \text{N}_\text{d}}$} as the hybrid combiner at the destination node. Eq. (\ref{4}) can be expressed as
\begin{equation}
\label{4.1}
\bm{y}_\text{d}=  \mathbf{W}_\text{d}\mathbf{H}_2\mathbf{G}_\text{r}\mathbf{H}_1\mathbf{F}_\text{t}\bm{s}+\mathbf{W}_\text{d}\mathbf{H}_2\mathbf{G}_\text{r}\bm{n}_1+\mathbf{W}_\text{d}\bm{n}_2.
\end{equation}

The relay's power constraint is
\begin{equation}
\label{3}
\E[\left\|\mathbf{G}_\text{r}\mathbf{H}_1\mathbf{F}_\text{t}\bm{s}+\mathbf{G}_\text{r}\bm{n}_1\right\|^2_F]\leq \text{E}_\text{r}.
\end{equation}

Based on this hybrid precoding/combining system model, we can derive the achieved data rate for the system as
\begin{equation}
\label{rate}
R=\frac{1}{2}\log_2\det( \mathbf{I}_{\text{N}_\text{s}}+\mathbf{W}_\text{d}\mathbf{H}_2\mathbf{G}_\text{r}\mathbf{H}_1\mathbf{F}_\text{t}\mathbf{R}_\text{n}^\text{-1}\mathbf{F}_\text{t}^H\mathbf{H}_1^H\mathbf{G}_\text{r}^H\mathbf{H}_2^H\mathbf{W}_\text{d}^H),
\end{equation}where {$\mathbf{R}_\text{n}=\sigma_1^2\mathbf{W}_\text{d}\mathbf{H}_2\mathbf{G}_\text{r}\mathbf{G}_\text{r}^H\mathbf{H}_2^H\mathbf{W}_\text{d}^H+\sigma_2^2\mathbf{W}_\text{d}\mathbf{W}_\text{d}^H$} is the covariance matrix of the colored Gaussian noise at the output of the baseband combiner. 

Generally, we want to jointly optimize the RF and baseband precoders/combiners to achieve the optimal data rate. However, finding the global optima for this problem (maxmizing $R$
while imposing constant-amplitude on the RF analog precoder/combiners) is non-convex and intractable. Separated RF and baseband processing designs, as \cite{ni2016hybrid} did, are investigated to obtain satisfying performance. Therefore, we will separate the RF and baseband domain designs in this paper.
\subsection{Channel model}
MmWave channels are expected to have limited scattering characteristic \cite{rappaport2013broadband,smulders1997characterisation,xu2002spatial}, which means the assumption of a rich scattering environment becomes invalid. This is called sparsity in the literature and leads to the unreliability of traditional channel models, such as the Rayleigh fading channel model. To characterize the limited scattering feature, we adopt the clustered mmWave channel model in \cite{mmwave_book,smulders1997characterisation,xu2002spatial,sayeed2002deconstructing} with $\text{L}$ scatters. Each scatter is assumed to contribute $\text{N}_\text{cl}$ propagation paths to the channel matrix H. Then, the channel is given by
\begin{equation}
\label{5}
\mathbf{H}=\sqrt{\frac{\text{N}_\text{t}\text{N}_\text{r}}{\text{L}\text{N}_\text{cl}}}\sum_{l=1}^\text{L}\sum_{n=1}^{\text{N}_\text{cl}} \alpha_{l,n} \bm{a}_\text{r} (\varphi_{l,n}^\text{r},\theta_{l,n}^\text{r}) \bm{a}^H_\text{t} (\varphi_{l,n}^\text{t},\theta_{l,n}^\text{t}),
\end{equation}
where $\alpha_{l,n}$ is the complex
gain of the {$n^\text{th}$} path in the {$l^\text{th}$} scatter with the distribution $\mathbb{CN}(0,1)$, {$\varphi_{l,n}^\text{r}(\theta_{l,n}^\text{r})$} and {$\varphi_{l,n}^\text{t}(\theta_{l,n}^\text{t})$} are the random azimuth and elevation angles of arrival and departure. $\bm{a}_\text{r} (\varphi_{l,n}^\text{r},\theta_{l,n}^\text{r})$ and $\bm{a}_\text{t} (\varphi_{l,n}^\text{t},\theta_{l,n}^\text{t})$ are the receiving  and transmitting antenna array response vectors, respectively. While the algorithms and results in the paper can be applied to arbitrary antenna arrays, we use uniform linear arrays (ULAs) in the simulations for simplicity. The array response vectors take the following form \cite{balanis2016antenna}:
\begin{equation}
\label{6}
\bm{a}^\text{ULA}(\varphi)=\frac{1}{\sqrt{\text{N}}}[1,e^{jkdsin(\varphi)},...,e^{j(\text{N}-1)kdsin(\varphi)}]^T,
\end{equation}where $k=\frac{2\pi}{\lambda}$. Parameter $\lambda$ represents the wavelength of the carrier and $d$ is the spacing between antenna elements. The angle $\varphi$ is assumed to have a uniform distribution over $[0, 2\pi]$.

Since the channel in mmWave systems has limited scattering, we can further simplify the channel by assuming each scatter only contributes one path to the channel matrix. Then, the channel can be further expressed as 
\begin{equation}
\label{5.1}
\mathbf{H}=\sqrt{\frac{\text{N}_\text{t}\text{N}_\text{r}}{\text{L}}}\sum_{l=1}^\text{L} \alpha_{l} \bm{a}_r (\varphi_{l}^r,\theta_{l}^r) \bm{a}^H_t (\varphi_{l}^t,\theta_{l}^t).
\end{equation}

The matrix formulation can be expressed as
\begin{equation}
\label{channel_simp}
\mathbf{H}=\sqrt{\frac{\text{N}_\text{t}\text{N}_\text{r}}{\text{L}}}\mathbf{A}_\text{r}diag(\bm{\alpha})\mathbf{A}_\text{t}^H, 
\end{equation}
where {$\mathbf{A}_\text{r}=[\bm{a}_\text{r} (\varphi_1^\text{r},\theta_1^\text{r}),...,\bm{a}_\text{r} (\varphi_\text{L}^\text{r},\theta_\text{L}^\text{r})]$} and {$\mathbf{A}_\text{t}=[\bm{a}_\text{t} (\varphi_1^\text{t},\theta_1^\text{t}),...,\bm{a}_\text{t} (\varphi_\text{L}^\text{t},\theta_\text{L}^\text{t})]$} are matrices containing the receiving and transmitting array response vectors, and {$\bm{\alpha}=[\alpha_1,...,\alpha_\text{L}]^T$}.
\vspace{-0.1cm}

\section{Hybrid Precoder/Combiner Design }
As discussed in Section \uppercase\expandafter{\romannumeral 2}, we use a hybrid design to reduce the number of RF chains. We first design the RF precoder/combiner. Then, based on the designed RF precoder/combiner, we design a low-complexity iterative algorithm for the baseband precoder/combiner to maximize the mutual information.
\subsection{RF precoder/combiner design}
Our goal for RF precoder/combiner is to make the channels decomposed into {$\text{N}_\text{RF}$} parallel sub-channels to support the multi-stream transmission. The main challenge is the constant-magnitude constraints on RF precoders and combiners. Without the constant-magnitude constraints, the optimal precoder/combiner should be the right/left singular matrix of the channel, which transmits the signals along the eigenmodes of the channel. Considering the constant-magnitude constraints, we cannot directly use the singular matrix to rotate the signals, but we can use the projection on each eigenmode as a criterion to choose RF precoder and combiner. For the { $i^\text{th}$} eigenmode, the best precoder should be the one that has the largest projection on that eigenmode, i.e., the one that casts the most energy along that eigenmode direction. 

Using {$\mathbf{H}_1$} as an example, we first perform the singular value decomposition (SVD) for the channel matrix.
\begin{equation}
\label{svd}
\mathbf{H}_1 = \mathbf{U}_1\Sigma_1\mathbf{V}_1^H = \sum_{i=1}^\text{L}\sigma_i\bm{u}_i\bm{v}_i^H,
\end{equation}
where {$\bm{u}_i$} and {$\bm{v}_i$} are the { $i^\text{th}$} vectors in matrices {$\mathbf{U}_1$} and {$\mathbf{V}_1$}, respectively, which correspond to {$\sigma_i$}. The singular values $\sigma_i$s are assumed to be in a descending order. $\text{L}$ is the rank of the channel and is equal to the number of propagation paths for the mmWave scenario. Note that for mmWave systems, the channels usually have limited scattering characteristics, which means the number of propagation paths is far less than $\min(\text{N}_\text{t}, \text{N}_\text{r})$. In such cases, the channel rank is equal to the number of propagation paths $\text{L}$. Eq. (\ref{svd}) indicates that channel {$\mathbf{H}_1$} has $\text{L}$ eigenmodes. We denote the {$i^\text{th}$} eigenmode by {$\bm{u}_i\bm{v}_i^H$}, and its gain by {$\sigma_i$}.

For our RF precoding/combining, we want to maximize the projection of the {$i^\text{th}$} data stream onto the {$i^\text{th}$} eigenmode, i.e., {$\left|\bm{w}_i^H\bm{u}_i\bm{v}_i^H\bm{f}_i\right|$}, where  { $\bm{f}_i$} and {$\bm{w}_i$} are the {$i^\text{th}$} vector of precoder {$\mathbf{F}_t^{RF}$} and combiner {$\mathbf{W}_\text{r}^\text{RF}$}, respectively.
To approach the maximal projection, we have the following proposition.
\vspace{-0.2cm}
\begin{proposition}
The optimal phase-only vectors {$\bm{f}_i$} and {$\bm{w}_i$}, which maximize the projection for the {$i^\text{th}$} data stream onto the {$i^\text{th}$} eigenmode of the channel, will satisfy the following conditions:
\begin{equation}
\label{c1}
phase(\bm{f}_i[m])=phase(\bm{v}_i[m])~\forall m=1,2,...,\text{N}_\text{t},
\end{equation}
\begin{equation}
\label{c2}
phase(\bm{w}_i[n])=phase(\bm{u}_i[n])~\forall n=1,2,...,\text{N}_\text{r},
\end{equation}
where {$\bm{\cdot}[k]$} represents the {$k^\text{th}$} element of a vector.
\end{proposition}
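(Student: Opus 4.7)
The plan is to observe that the quantity $\left|\bm{w}_i^H \bm{u}_i \bm{v}_i^H \bm{f}_i\right|$ factors as a product of two scalar magnitudes, $|\bm{w}_i^H \bm{u}_i|\cdot|\bm{v}_i^H \bm{f}_i|$, one depending only on $\bm{w}_i$ and the other only on $\bm{f}_i$. This decouples the joint maximization over the two phase-only vectors into two independent phase-alignment subproblems of identical structure, so I only need to solve one of them and invoke symmetry for the other.

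First I would focus on maximizing $|\bm{v}_i^H \bm{f}_i|$ over constant-modulus vectors $\bm{f}_i$ with $|\bm{f}_i[m]|=1/\sqrt{\text{N}_\text{t}}$. Writing $\bm{f}_i[m]=\tfrac{1}{\sqrt{\text{N}_\text{t}}}e^{j\phi_f[m]}$ and $\bm{v}_i[m]=|\bm{v}_i[m]|e^{j\phi_v[m]}$, I would expand
\begin{equation}
\bm{v}_i^H \bm{f}_i \;=\; \frac{1}{\sqrt{\text{N}_\text{t}}}\sum_{m=1}^{\text{N}_\text{t}} |\bm{v}_i[m]|\,e^{j(\phi_f[m]-\phi_v[m])}.
\end{equation}
By the triangle inequality, the modulus of this sum is bounded above by $\frac{1}{\sqrt{\text{N}_\text{t}}}\sum_{m=1}^{\text{N}_\text{t}} |\bm{v}_i[m]|$, with equality if and only if all summands share a common phase, i.e., $\phi_f[m]-\phi_v[m]$ is independent of $m$. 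Choosing the canonical zero offset yields $\phi_f[m]=\phi_v[m]$ for every $m$, which is precisely condition (\ref{c1}).

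Next I would repeat the same argument verbatim with the pair $(\bm{v}_i,\bm{f}_i)$ replaced by $(\bm{u}_i,\bm{w}_i)$ and $\text{N}_\text{t}$ replaced by $\text{N}_\text{r}$, establishing condition (\ref{c2}). Because the two factors are decoupled from the outset, the pair of maximizers so obtained jointly attains the overall maximum of $|\bm{w}_i^H \bm{u}_i \bm{v}_i^H \bm{f}_i|$, giving the claim.

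I do not anticipate any substantive obstacle; the only bookkeeping subtlety is the global phase ambiguity, since any common phase offset applied uniformly to all entries of $\bm{f}_i$ (or $\bm{w}_i$) leaves the objective invariant. Hence conditions (\ref{c1})–(\ref{c2}) should be understood as holding modulo such a constant offset, with the zero-offset representative stated for concreteness.
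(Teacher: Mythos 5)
Your proof is correct, and it follows the same overall structure as the paper's: factor $\left|\bm{w}_i^H\bm{u}_i\bm{v}_i^H\bm{f}_i\right|$ into the product $\left|\bm{w}_i^H\bm{u}_i\right|\cdot\left|\bm{v}_i^H\bm{f}_i\right|$, decouple the two subproblems, and solve each by phase alignment. The one substantive difference is the inequality used at the key step, and here your choice is actually the better one. The paper bounds $\bigl|\tfrac{1}{\sqrt{\text{N}_\text{t}}}\sum_m r_m^i e^{j(\alpha_m^i-\theta_m^i)}\bigr|^2$ via the Cauchy--Schwartz inequality and claims equality holds iff $\theta_m^i=\alpha_m^i$ for all $m$. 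But equality in Cauchy--Schwartz requires the two sequences to be proportional, i.e., $r_m^i=\lambda e^{-j(\alpha_m^i-\theta_m^i)}$ for a single constant $\lambda$, which forces all the magnitudes $r_m^i$ to be equal --- a condition that does not hold for a generic singular vector. So the Cauchy--Schwartz bound is not tight in general, and the paper's stated equality condition is imprecise (there is also a dropped factor of $\text{N}_\text{t}$ in the displayed bound). Your triangle-inequality argument gives the correct tight bound $\tfrac{1}{\sqrt{\text{N}_\text{t}}}\sum_m\left|\bm{v}_i[m]\right|$ with the correct equality condition (all summands share a common phase), and it correctly identifies that the optimizer is only determined up to a global phase offset, with \eqref{c1}--\eqref{c2} being the zero-offset representative. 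In short: same decomposition, but your key lemma is the one that actually closes the argument rigorously.
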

\begin{proof}
First, we express the vectors in polar coordinates. Due to the magnitude-constant constraints, vectors {$\bm{f}_i$} and {$\bm{w}_i$} are expressed as {$\bm{f}_i=\frac{1}{\sqrt{\text{N}_\text{t}}}[e^{j\theta_1^i}, e^{j\theta_2^i},...,e^{j\theta_{\text{N}_\text{t}}^i}]^T$} and {$\bm{w}_i=\frac{1}{\sqrt{\text{N}_\text{r}}}[e^{j\varphi_1^i}, e^{j\varphi_2^i},...,e^{j\varphi_{\text{N}_\text{r}}^i}]^T$}. Since there are no constant-magnitude constraints for  {$\bm{v}_i$} and {$\bm{u}_i$}, each element in the vector has its own magnitude. The polar forms of {$\bm{v}_i$} and {$\bm{u}_i$} are {$\bm{v}_i=[r_1^ie^{j\alpha_1^i},r_2^ie^{j\alpha_2^i},...,r_{\text{N}_\text{t}}^ie^{j\alpha_{\text{N}_\text{t}}^i}]^T$} and {$\bm{u}_i=[\rho_1^ie^{j\beta_1^i},\rho_2^ie^{j\beta_2^i},...,\rho_{\text{N}_\text{r}}^ie^{j\beta_{\text{N}_\text{r}}^i}]^T$}, respectively. 
Then, the projection can be calculated as
\begin{equation}
\left|\bm{w}_i^H\bm{u}_i\bm{v}_i^H\bm{f}_i\right| = 
\left|\frac{1}{\sqrt{\text{N}_\text{r}}}\sum_{n=1}^{\text{N}_\text{r}}\rho_n^ie^{j(\varphi_n^i-\beta_n^i)}\right| \left|\frac{1}{\sqrt{\text{N}_\text{t}}}\sum_{m=1}^{\text{N}_\text{t}}r_m^ie^{j(\alpha_m^i-\theta_m^i)}\right|.
\end{equation}

According to the Cauchy-Schwartz inequality, we have
\begin{equation}
\begin{split}
\label{ss1}
&\left|\frac{1}{\sqrt{\text{N}_\text{t}}}\sum_{m=1}^{\text{N}_\text{t}}r_m^ie^{j(\alpha_m^i-\theta_m^i)}\right|^2\\
&\leq\frac{1}{\text{N}_\text{t}}\sum_{m=1}^{\text{N}_\text{t}}\left|r_m^i\right|^2\sum_{m=1}^{\text{N}_\text{t}}\left|e^{j(\alpha_m^i-\theta_m^i)}\right|^2=\frac{1}{\text{N}_\text{t}}\sum_{m=1}^{\text{N}_\text{t}}\left|r_m^i\right|^2.
\end{split}
\end{equation}
Equality can be achieved in (\ref{ss1}) if and only if {$\theta_m^i=\alpha_m^i~ \forall m=1,2,...,\text{N}_\text{t}$}. This means the maximal {$\left|\bm{v}_i^H\bm{f}_i\right|$} is achieved when {$\theta_m^i=\alpha_m^i~ \forall m=1,2,...,\text{N}_\text{t}$}.
Similarly, the maximal {$\left|\bm{w}_i^H\bm{u}_i\right|$} is achieved when {$\varphi_n^i=\beta_n^i ~ \forall n=1,2,...,\text{N}_\text{r}$}. Therefore, we have the conclusion that the optimal phase-only vectors {$\bm{f}_i$} and {$\bm{w}_i$}, which maximize {$|\bm{w}_i^H\bm{u}_i\bm{v}_i^H\bm{f}_i|$}, will satisfy the conditions in (\ref{c1}) and (\ref{c2}).
%
%
\end{proof}
Our RF precoders and combiners are actually compensating the phase of each sub-channel. Note that when the number of antennas is large enough, the response vectors {$\bm{a}_\text{r} (\varphi_l^\text{r},\theta_l^\text{r})$s and $\bm{a}_\text{t} (\varphi_l^\text{t},\theta_l^\text{t})$s} become orthogonal to each other.  {$\mathbf{A}_\text{t}$} and {$\mathbf{A}_\text{r}$} become the left and right singular matrices of the channel and they directly become our RF precoder and combiner. In this case, we can perfectly decompose the channel into independent parallel sub-channels. The equivalent channel after RF processing is diagonal, which makes it easier for baseband processing.

\subsection{Baseband system}
In this section, we focus on designing the baseband precoding/combining matrices. First, we define the equivalent baseband channels for {$\mathbf{H}_1$} and {$\mathbf{H}_2$} as
\begin{equation}
\label{effechannel1}
\tilde{\mathbf{H}}_1= \mathbf{W}_\text{r}^\text{RF}\mathbf{H}_1\mathbf{F}_\text{t}^\text{RF},
\end{equation}
\begin{equation}
\label{effechannel2}
\tilde{\mathbf{H}}_2= \mathbf{W}_\text{d}^\text{RF}\mathbf{H}_2\mathbf{F}_\text{r}^\text{RF}.
\end{equation}

Based on the equivalent channels, we simplify our system model as shown in Fig. \ref{baseband system}.
\begin{figure*}[htp]
\setlength{\abovedisplayskip}{3pt}
\setlength{\belowdisplayskip}{3pt}
\centering
\includegraphics[width=8 cm]{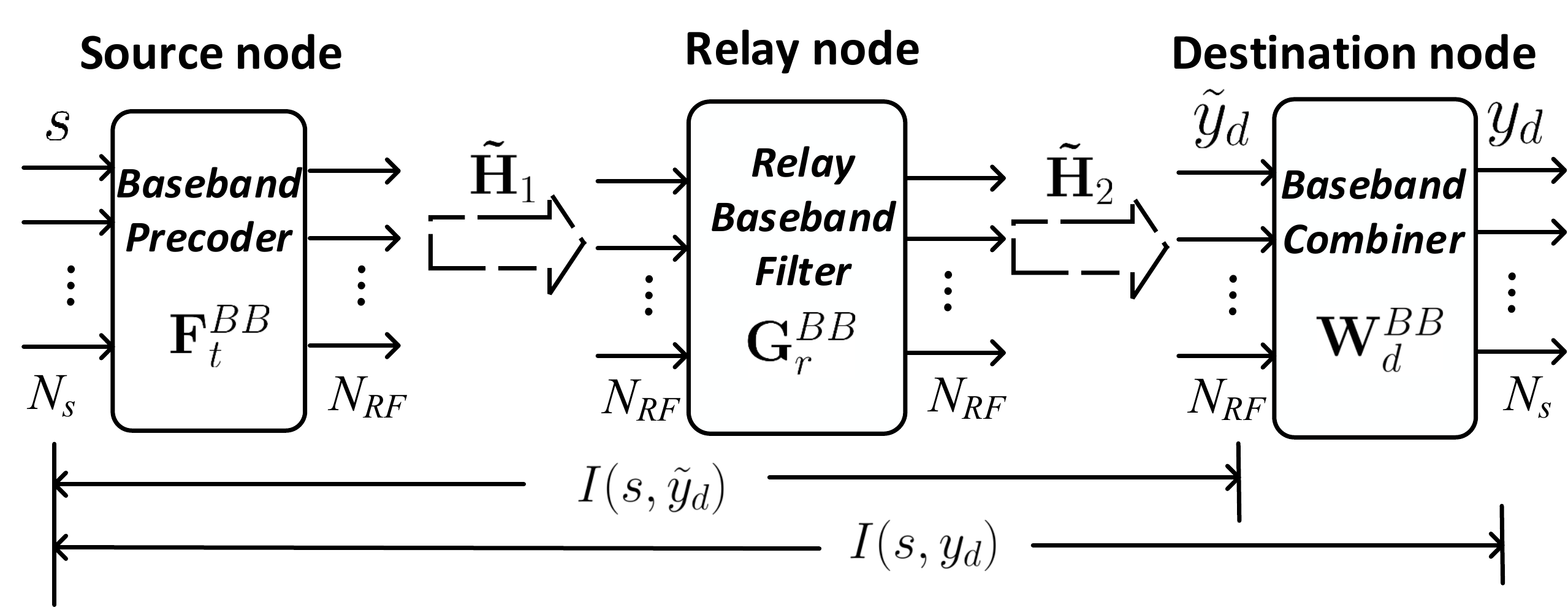}
\caption{Baseband System model}
\label{baseband system}
\end{figure*}

Using the equivalent channels (\ref{effechannel1}) and (\ref{effechannel2}), we rewrite the received signals at the destination node as
\begin{equation}
\tilde{\bm{y}}_\text{d} =\tilde{\mathbf{H}}_2\mathbf{G}_\text{r}^\text{BB}\tilde{\mathbf{H}}_1\mathbf{F}_\text{t}^\text{BB}\bm{s}+\tilde{\mathbf{H}}_2\mathbf{G}_\text{r}^\text{BB}\tilde{\bm{n}}_1+\tilde{\bm{n}}_2,
\end{equation}
\begin{equation}
\bm{y}_\text{d}=\mathbf{W}_\text{d}^\text{BB}\tilde{\mathbf{H}}_2\mathbf{G}_\text{r}^\text{BB}\tilde{\mathbf{H}}_1\mathbf{F}_\text{t}^\text{BB}\bm{s}+\mathbf{W}_\text{d}^\text{BB}\tilde{\mathbf{H}}_2\mathbf{G}_\text{r}^\text{BB}\tilde{\bm{n}}_1+\mathbf{W}_\text{d}^\text{BB}\tilde{\bm{n}}_2,
\end{equation}
where {$\tilde{\bm{n}}_1=\mathbf{W}_\text{r}^\text{RF}\bm{n}_1$} and {$\tilde{\bm{n}}_2=\mathbf{W}_\text{d}^\text{RF}\bm{n}_2$}. 

Our ultimate goal for the baseband design is to maximize the mutual information {$I(\bm{s},\bm{y}_\text{d})$}. However, directly optimizing {$I(\bm{s},\bm{y}_\text{d})$} is intractable. According to the data processing inequality \cite{gray2011entropy}, {$I(\bm{s},\bm{y}_\text{d}) \le I(\bm{s},\tilde{\bm{y}}_\text{d})$}. We first design {$\mathbf{F}_\text{t}^\text{BB}$ and $\mathbf{G}_\text{r}^\text{BB}$} to maximize the mutual information {$I(\bm{s},\tilde{\bm{y}}_\text{d})$}. After we get the maximum {$I(\bm{s},\tilde{\bm{y}}_\text{d})$}, we implement the MMSE-SIC for {$\mathbf{W}_\text{d}^\text{BB}$}, which according to \cite{tse2005fundamentals} is information lossless. In this way, we make {$I(\bm{s},\bm{y}_\text{d}) = I(\bm{s},\tilde{\bm{y}}_\text{d})$}. Since {$I(\bm{s},\tilde{\bm{y}}_\text{d})$} is maximized, {$I(\bm{s},\bm{y}_\text{d})$} is also maximized because of the data processing inequality and the independence of  {$I(\bm{s}; \tilde{\bm{y}}_\text{d})$} from {$\mathbf{W}_\text{d}^\text{BB}$}.

\subsection{$\mathbf{F}_\text{t}^\text{BB}$ and $\mathbf{G}_\text{r}^\text{BB}$ design}
In this section, we jointly design {$\mathbf{F}_\text{t}^\text{BB}$} and {$\mathbf{G}_\text{r}^\text{BB}$} to maximize {$I(\bm{s},\tilde{\bm{y}}_\text{d})$}. According to \cite{christensen2008weighted}, there exists a relationship between {$I(\bm{s},\tilde{\bm{y}}_\text{d})$} and the MSE matrix {$\mathbf{E}_\text{MMSE}$}, i.e., 
\begin{equation}
\label{relation}
I(\bm{s},\tilde{\bm{y}}_\text{d})= \log_2\det(\mathbf{E}_\text{MMSE}^{-1}),
\end{equation}
where the MSE matrix $\mathbf{E}_\text{MMSE}$ is defined as the mean square error covariance matrix given the MMSE receiver. The detailed proof can be found in \cite{christensen2008weighted}. We give a brief derivation procedure below.

The MMSE receiver is defined as 
\begin{equation}
\begin{split}
\label{mmsebb}
\mathbf{W}_\text{d}^\text{MMSE}&=\argmin \E[\|\mathbf{W}_\text{d}^\text{BB}\tilde{\bm{y}}_\text{d}-\bm{s}\|^2]\\
&=(\tilde{\mathbf{H}}_2\mathbf{G}_\text{r}^\text{BB}\tilde{\mathbf{H}}_1\mathbf{F}_\text{t}^\text{BB})^H(\tilde{\mathbf{H}}_2\mathbf{G}_\text{r}^\text{BB}\tilde{\mathbf{H}}_1\mathbf{F}_\text{t}^\text{BB}(\tilde{\mathbf{H}}_2\mathbf{G}_\text{r}^\text{BB}\tilde{\mathbf{H}}_1\mathbf{F}_\text{t}^\text{BB})^H+\mathbf{R}_{\tilde{\text{n}}})^{-1},
\end{split}
\end{equation}
where {$\mathbf{R}_{\tilde{\text{n}}}=\sigma_1^2\tilde{\mathbf{H}}_2\mathbf{G}_\text{r}^\text{BB}\mathbf{W}_\text{r}^\text{RF}(\tilde{\mathbf{H}}_2\mathbf{G}_\text{r}^\text{BB}\mathbf{W}_\text{r}^\text{RF})^H+\sigma_2^2\mathbf{W}_\text{r}^\text{RF}(\mathbf{W}_\text{r}^\text{RF})^H$}.

The MMSE matrix $\mathbf{E}_\text{MMSE}$ can be calculated by
\begin{equation}
\begin{split}
\label{EMMSE}
\mathbf{E}_\text{MMSE}&=\E[(\mathbf{W}_\text{d}^\text{MMSE}\tilde{\bm{y}}_\text{d}-\bm{s})(\mathbf{W}_\text{d}^\text{MMSE}\tilde{\bm{y}}_\text{d}-\bm{s})^H]\\&=(\mathbf{I}_{\text{N}_\text{s}} - \mathbf{W}_\text{d}^\text{MMSE}\tilde{\mathbf{H}}_2\mathbf{G}_\text{r}^\text{BB}\tilde{\mathbf{H}}_1\mathbf{F}_\text{t}^\text{BB})\\
&(\mathbf{I}_{\text{N}_\text{s}} - \mathbf{W}_\text{d}^\text{MMSE}\tilde{\mathbf{H}}_2\mathbf{G}_\text{r}^\text{BB}\tilde{\mathbf{H}}_1\mathbf{F}_\text{t}^\text{BB})^H + \mathbf{W}_\text{d}^\text{MMSE}\mathbf{R}_{\tilde{\text{n}}}(\mathbf{W}_\text{d}^\text{MMSE})^H.
\end{split}
\end{equation}
Substituting (\ref{mmsebb}) into (\ref{EMMSE}), we can express $\mathbf{E}_\text{MMSE}$ as
\begin{equation}
\label{EMMSE1}
\mathbf{E}_\text{MMSE}=(\mathbf{I}_{\text{N}_\text{s}}+(\tilde{\mathbf{H}}_2\mathbf{G}_\text{r}^\text{BB}\tilde{\mathbf{H}}_1\mathbf{F}_\text{t}^\text{BB})^H\mathbf{R}_{\tilde{\text{n}}}^{-1}\tilde{\mathbf{H}}_2\mathbf{G}_\text{r}^\text{BB}\tilde{\mathbf{H}}_1\mathbf{F}_\text{t}^\text{BB})^{-1}.
\end{equation}
From (\ref{EMMSE1}), we can obtain (\ref{relation}).

Based on (\ref{relation}), we can establish the equivalence between the {$I(\bm{s},\tilde{\bm{y}}_\text{d})$} maximization problem and a WMMSE problem as \cite{christensen2008weighted} did.

The {$I(\bm{s},\tilde{\bm{y}}_\text{d})$} maximization problem is formulated as
\begin{equation}
\label{P0}
\begin{split}
& \min_{\mathbf{F}_\text{t}^\text{BB},\mathbf{G}_\text{r}^\text{BB}} -I(\bm{s},\tilde{\bm{y}}_\text{d})\\
s.t.~&\left\|\mathbf{F}_\text{t}^\text{RF}\mathbf{F}_\text{t}^\text{BB}\right\|^2_F \leq \text{E}_\text{t},\\
&\E[\left\|\mathbf{F}_\text{r}^\text{RF}\mathbf{G}_\text{r}^\text{BB}\tilde{\mathbf{H}}_1\mathbf{F}_\text{t}^\text{BB}\bm{s}+\mathbf{F}_\text{r}^\text{RF}\mathbf{G}_\text{r}^\text{BB}\mathbf{W}_\text{r}^\text{RF}\bm{n}_1\right\|^2_F] \leq \text{E}_\text{r}.
\end{split}
\end{equation}

The WMMSE problem is formulated as
\begin{equation}
\label{P1}
\begin{split}
&\min_{\mathbf{F}_t^{BB},\mathbf{G}_r^{BB},\mathbf{V}}~\Tr(\mathbf{V} \mathbf{E}_{MMSE})\\
s.t.~&\left\|\mathbf{F}_t^{RF}\mathbf{F}_t^{BB}\right\|^2_F\le \text{E}_\text{t},\\
&\E[\left\|\mathbf{F}_r^{RF}\mathbf{G}_r^{BB}\tilde{\mathbf{H}}_1\mathbf{F}_t^{BB}\bm{s}+\mathbf{F}_r^{RF}\mathbf{G}_r^{BB}\mathbf{W}_r^{RF}\bm{n}_1\right\|^2_F]\le \text{E}_\text{r},
\end{split}
\end{equation}
where {$\mathbf{V}$} is a constant weight matrix. 

We will show that Problems (\ref{P0}) and (\ref{P1}) have the same optimum solution, i.e., the points that satisfy the KKT conditions for (\ref{P0}) and (\ref{P1}) are the same. Same as \cite{christensen2008weighted}, we set the partial derivatives of the Lagrange functions of (\ref{P0}) and (\ref{P1}) to zero. Note that the power constraints of (\ref{P0}) and (\ref{P1}) are the same. To prove the equivalence, we only need to calculate the partial derivatives of {$ -I(\bm{s},\tilde{\bm{y}}_\text{d})$} and {$tr(\mathbf{V} \mathbf{E}_\text{MMSE})$} w.r.t {$\mathbf{F}_\text{t}^\text{BB}$ and $\mathbf{G}_\text{r}^\text{BB}$}. Note that {$\partial\log\det(\mathbf{X}) = \Tr(\mathbf{X^{-1}}\partial\mathbf{X})$}. Taking {$\mathbf{F}_\text{t}^\text{BB}$} as an example, for {$ I(\bm{s},\tilde{\bm{y}}_\text{d})$}, we have
\begin{equation}
\frac{\partial -I(\bm{s},\tilde{\bm{y}}_\text{d}) }{\partial \mathbf{F}_\text{t}^\text{BB}} = -\frac{\partial\log_2\det(\mathbf{E}_\text{MMSE})}{\partial \mathbf{F}_\text{t}^\text{BB}}= - \frac{\Tr(\mathbf{{E}_\text{MMSE}^{-1}}\partial\mathbf{E}_\text{MMSE})}{\log_2\partial \mathbf{F}_\text{t}^\text{BB}},
\end{equation}

Note that {$\partial \mathbf{X}^{-1} = -\mathbf{X}^{-1}(\partial \mathbf{X})\mathbf{X}^{-1}$} and {$\partial (\mathbf{AX})= \partial (\mathbf{X})\mathbf{A} +\partial (\mathbf{A})\mathbf{X} $}. For {$tr(\mathbf{V} \mathbf{E}_\text{MMSE})$}, we have
\begin{equation}
\begin{split}
&\frac{\partial \Tr(\mathbf{V} \mathbf{E}_\text{MMSE}) }{\partial \mathbf{F}_\text{t}^\text{BB}} = -\frac{\Tr(\partial (\mathbf{V} (\mathbf{E}^{-1}_\text{MMSE})^{-1}))}{\partial \mathbf{F}_\text{t}^\text{BB}}\\
&= - \frac{\Tr(\mathbf{E}_\text{MMSE}\partial(\mathbf{E}^{-1}_\text{MMSE})\mathbf{E}_\text{MMSE}\mathbf{V}+\partial(\mathbf{V})\mathbf{E}_\text{MMSE})}{\partial \mathbf{F}_\text{t}^\text{BB}}\\
& =  - \frac{\Tr(\mathbf{E}_\text{MMSE}\partial(\mathbf{E}^{-1}_\text{MMSE})\mathbf{E}_\text{MMSE}\mathbf{V})}{\partial \mathbf{F}_\text{t}^\text{BB}}.
\end{split}
\end{equation}

If we set the constant weight matrix $\mathbf{V} = \frac{\mathbf{E}^{-1}_\text{MMSE}}{\log2}$, then we have
\begin{equation}
\label{eq1}
\frac{\partial -I(\bm{s},\tilde{\bm{y}}_\text{d}) }{\partial \mathbf{F}_\text{t}^\text{BB}} = \frac{\partial \Tr(\mathbf{V} \mathbf{E}_\text{MMSE}) }{\partial \mathbf{F}_\text{t}^\text{BB}}.
\end{equation}

Similarly, we can derive
\begin{equation}
\label{eq2}
\frac{\partial -I(\bm{s},\tilde{\bm{y}}_\text{d}) }{\partial \mathbf{G}_\text{r}^\text{BB}} = \frac{\partial \Tr(\mathbf{V} \mathbf{E}_\text{MMSE}) }{\partial \mathbf{G}_\text{r}^\text{BB}}.
\end{equation}

From Eqs. (\ref{eq1}) and (\ref{eq2}), we can conclude that  the KKT-conditions of (\ref{P0}) and (\ref{P1}) can be satisfied simultaneously, which suggests that it is possible to solve the mutual information maximization problem through the use of WMMSE by choosing an appropriate weight, i.e., $\mathbf{V}$. 
To maximize $I(\bm{s},\tilde{\bm{y}}_d)$, we propose an iterative algorithm based on the WMMSE problem (\ref{P1}). Note that in the proposed algorithm, we also iteratively search for the appropriate weight matrix. When the algorithm converges, we will obtain the desired weight matrix as well as the optimal filters that maximize $I(\bm{s},\tilde{\bm{y}}_\text{d})$. The detailed algorithm is described as follows:
%

\begin{enumerate} 
\item Calculate the MMSE receiver {$\mathbf{W}_\text{d}^\text{MMSE}$} in Eq. (\ref{mmsebb}) and the MSE matrix {$\mathbf{E}_\text{MMSE}$} in Eq. (\ref{EMMSE}).
\item Update {$\mathbf{V}$} by setting  {$\mathbf{V} = \frac{\mathbf{E}_\text{MMSE}^{-1}}{\log2}$}.
\item Fixing {$\mathbf{V}$} and {$\mathbf{F}_\text{t}^\text{BB}$}, then we find {$\mathbf{G}_\text{r}^\text{BB}$} that minimizes {$\Tr(\mathbf{VE}_\text{MMSE})=\\
\Tr(\mathbf{V}((\mathbf{I}_{\text{N}_\text{s}} - \mathbf{W}_\text{d}^\text{MMSE}\tilde{\mathbf{H}}_2\mathbf{G}_\text{r}^\text{BB}\tilde{\mathbf{H}}_1\mathbf{F}_\text{t}^\text{BB})(\mathbf{I}_{\text{N}_\text{s}} - \mathbf{W}_\text{d}^\text{MMSE}\tilde{\mathbf{H}}_2\mathbf{G}_\text{r}^\text{BB}\tilde{\mathbf{H}}_1\mathbf{F}_\text{t}^\text{BB})^H + \mathbf{W}_\text{d}^\text{MMSE}\mathbf{R}_{\tilde{\text{n}}}(\mathbf{W}_\text{d}^\text{MMSE})^H))$} under the power constraints, i.e.,
\begin{equation}
\label{subPGr}
\begin{split}
\hat{\mathbf{G}}_\text{r}^\text{BB} &=\argmin~\Tr(\mathbf{V}((\mathbf{I}_{\text{N}_\text{s}} - \mathbf{W}_\text{d}^\text{MMSE}\tilde{\mathbf{H}}_2\mathbf{G}_\text{r}^\text{BB}\tilde{\mathbf{H}}_1\mathbf{F}_\text{t}^\text{BB})\\
&(\mathbf{I}_{\text{N}_\text{s}} - \mathbf{W}_\text{d}^\text{MMSE}\tilde{\mathbf{H}}_2\mathbf{G}_\text{r}^\text{BB}\tilde{\mathbf{H}}_1\mathbf{F}_\text{t}^\text{BB})^H + \mathbf{W}_\text{d}^\text{MMSE}\mathbf{R}_{\tilde{\text{n}}}(\mathbf{W}_\text{d}^\text{MMSE})^H))\\
s.t.~&\E[\left\|\mathbf{F}_\text{r}^\text{RF}\mathbf{G}_\text{r}^\text{BB}\tilde{\mathbf{H}}_1\mathbf{F}_\text{t}^\text{BB}\bm{s}+\mathbf{F}_\text{r}^\text{RF}\mathbf{G}_\text{r}^\text{BB}\mathbf{W}_\text{r}^\text{RF}\bm{n}_1\right\|^2_F]\leq \text{E}_\text{r}.
\end{split}
\end{equation}
Problem (\ref{subPGr}) is a convex optimization for $\mathbf{G}_\text{r}^\text{BB}$ and we can solve it using the KKT condition. Denoting the Lagrange function of Problem (\ref{subPGr}) as $L^r(\mathbf{G}_\text{r}^\text{BB},\lambda^\text{r})=\Tr(\mathbf{VE}_\text{MMSE})+\lambda^\text{r}(\left\|\mathbf{F}_\text{r}^\text{RF}\mathbf{G}_\text{r}^\text{BB}\tilde{\mathbf{H}}_1\mathbf{F}_\text{t}^\text{BB}\bm{s}+\mathbf{F}_\text{r}^\text{RF}\mathbf{G}_\text{r}^\text{BB}\mathbf{W}_\text{r}^\text{RF}\bm{n}_1\right\|^2_F- E_\text{r})$, the KKT conditions are 
\begin{equation}
\label{lagrang}
\frac{\partial L^\text{r}(\mathbf{G}_\text{r}^\text{BB},\lambda^\text{r})}{\partial \mathbf{G}_\text{r}^\text{BB}} =0,
\end{equation}
\begin{equation}
\label{lagc1}
\E[\left\|\mathbf{F}_\text{r}^\text{RF}\mathbf{G}_\text{r}^\text{BB}\tilde{\mathbf{H}}_1\mathbf{F}_\text{t}^\text{BB}\bm{s}+\mathbf{G}_\text{r}^\text{BB}\mathbf{W}_\text{r}^\text{RF}\bm{n}_1\right\|^2_F]- \text{E}_\text{r}\leq 0,
\end{equation}
\begin{equation}
\label{lagc2}
\lambda^\text{r}(\E[\left\|\mathbf{F}_\text{r}^\text{RF}\mathbf{G}_\text{r}^\text{BB}\tilde{\mathbf{H}}_1\mathbf{F}_\text{t}^\text{BB}\bm{s}+\mathbf{G}_\text{r}^\text{BB}\mathbf{W}_\text{r}^\text{RF}\bm{n}_1\right\|^2_F]- \text{E}_\text{r})=0,
\end{equation}
\begin{equation}
\label{lagc3}
\lambda^\text{r}\geq 0.
\end{equation}
Solving (\ref{lagrang}), we have 
\begin{equation}
\label{GrBB}
\begin{split}
\hat{\mathbf{G}}_\text{r}^\text{BB} &= ((\mathbf{W}_\text{d}^\text{MMSE}\tilde{\mathbf{H}}_2)^H\mathbf{V}\mathbf{W}_\text{d}^\text{MMSE}\tilde{\mathbf{H}}_2+\lambda^\text{r}(\mathbf{F}_\text{r}^\text{RF})^H\mathbf{F}_\text{r}^\text{RF})^{-1}\\
&(\mathbf{\tilde{H}}_2)^H(\mathbf{W}_\text{d}^\text{MMSE})^H\mathbf{V}(\mathbf{F}_\text{t}^\text{BB})^H(\mathbf{\tilde{H}}_1)^H(\mathbf{\tilde{H}}_1\mathbf{F}_\text{t}^\text{BB}(\mathbf{\tilde{H}}_1\mathbf{F}_\text{t}^\text{BB})^H+\sigma_1^2\mathbf{W}_\text{r}^\text{RF}(\mathbf{W}_\text{r}^\text{RF})^H)^{-1}.
\end{split}
\end{equation}

Based on (\ref{lagc1}) and (\ref{lagc2}), we can obtain the Lagrange multiplier $\lambda^\text{r}$ as follows. First, we calculate $\hat{\mathbf{G}}_\text{r}^\text{BB}$ by setting $\lambda^\text{r}=0$. If the power constraint is satisfied, then we set $\lambda^\text{r}=0$. If the power constraint is not satisfied, then, we initialize $\lambda^\text{r}$ with a pre-defined value and substitute it into (\ref{lagc1}) and  start a bisection search for $\lambda^\text{r}$ until the power constraint is satisfied.  

\item Fixing {$\mathbf{V}$}  and  {$\mathbf{G}_\text{r}^\text{BB}$}, then we find the {$\mathbf{F}_\text{t}^\text{BB}$} to minimize {$\Tr(\mathbf{VE}_\text{MMSE})=\\
	\Tr(\mathbf{V}((\mathbf{I}_{\text{N}_\text{s}} - \mathbf{W}_\text{d}^\text{MMSE}\tilde{\mathbf{H}}_2\mathbf{G}_\text{r}^\text{BB}\tilde{\mathbf{H}}_1\mathbf{F}_\text{t}^\text{BB})(\mathbf{I}_{\text{N}_\text{s}} - \mathbf{W}_\text{d}^\text{MMSE}\tilde{\mathbf{H}}_2\mathbf{G}_\text{r}^\text{BB}\tilde{\mathbf{H}}_1\mathbf{F}_\text{t}^\text{BB})^H + \mathbf{W}_\text{d}^\text{MMSE}\mathbf{R}_{\tilde{\text{n}}}(\mathbf{W}_\text{d}^\text{MMSE})^H))$} under the power constraints, i.e.,
\begin{equation}
\label{subPFt}
\begin{split}
&\hat{\mathbf{F}}_\text{t}^\text{BB}=\argmin~\Tr(\mathbf{V}((\mathbf{I}_{\text{N}_\text{s}} - \mathbf{W}_\text{d}^\text{MMSE}\tilde{\mathbf{H}}_2\mathbf{G}_\text{r}^\text{BB}\tilde{\mathbf{H}}_1\mathbf{F}_\text{t}^\text{BB})\\
&(\mathbf{I}_{\text{N}_\text{s}} - \mathbf{W}_\text{d}^\text{MMSE}\tilde{\mathbf{H}}_2\mathbf{G}_\text{r}^\text{BB}\tilde{\mathbf{H}}_1\mathbf{F}_\text{t}^\text{BB})^H + \mathbf{W}_\text{d}^\text{MMSE}\mathbf{R}_{\tilde{\text{n}}}(\mathbf{W}_\text{d}^\text{MMSE})^H))\\
s.t.~&\left\|\mathbf{F}_\text{t}^\text{RF}\mathbf{F}_\text{t}^\text{BB}\right\|^2_F\leq \text{E}_\text{t},\\
&\E[\left\|\mathbf{F}_\text{r}^\text{RF}\mathbf{G}_\text{r}^\text{BB}\tilde{\mathbf{H}}_1\mathbf{F}_\text{t}^\text{BB}s+\mathbf{F}_\text{r}^\text{RF}\mathbf{G}_\text{r}^\text{BB}\mathbf{W}_\text{r}^\text{RF}\bm{n}_1\right\|^2_F]\leq \text{E}_\text{r}.
\end{split}
\end{equation}
	Problem (\ref{subPFt}) is a convex optimization for {$\mathbf{F}_\text{t}^\text{BB}$} and, similar with Problem (\ref{subPGr}), we can solve Problem (\ref{subPFt}) using the KKT conditions. Denoting the Lagrange function of Problem (\ref{subPFt}) as $L^\text{t}(\mathbf{F}_\text{t}^\text{BB},\lambda^\text{t}_1,\lambda^\text{t}_2)$, the KKT conditions are
\begin{equation}
\label{lagrangFt}
\frac{\partial L^\text{t}(\mathbf{F}_\text{t}^\text{BB},\lambda^\text{t}_1,\lambda^\text{t}_2)}{\partial \mathbf{F}_\text{t}^\text{BB}} =0,
\end{equation}
\begin{equation}
\label{lagFtc1}
\left\|\mathbf{F}_\text{t}^\text{RF}\mathbf{F}_\text{t}^\text{BB}\right\|^2_F-\text{E}_\text{t}\leq 0,
\end{equation}
\begin{equation}
\label{lagFtc2}
\E[\left\|\mathbf{F}_\text{r}^\text{RF}\mathbf{G}_\text{r}^\text{BB}\tilde{\mathbf{H}}_1\mathbf{F}_\text{t}^\text{BB}s+\mathbf{F}_\text{r}^\text{RF}\mathbf{G}_\text{r}^\text{BB}\mathbf{W}_\text{r}^\text{RF}\bm{n}_1\right\|^2_F]- \text{E}_\text{r}\leq 0,
\end{equation}
\begin{equation}
\label{lagFtc3}
\lambda^\text{t}_1(\left\|\mathbf{F}_\text{t}^\text{RF}\mathbf{F}_\text{t}^\text{BB}\right\|^2_F-\text{E}_\text{t})=0,
\end{equation}
\begin{equation}
\label{lagFtc4}
\lambda^\text{t}_2(\E[\left\|\mathbf{F}_\text{r}^\text{RF}\mathbf{G}_\text{r}^\text{BB}\tilde{\mathbf{H}}_1\mathbf{F}_\text{t}^\text{BB}s+\mathbf{F}_\text{r}^\text{RF}\mathbf{G}_\text{r}^\text{BB}\mathbf{W}_\text{r}^\text{RF}\bm{n}_1\right\|^2_F]- \text{E}_\text{r})=0,
\end{equation}
\begin{equation}
\label{lagc3}
\lambda^\text{t}_1,\lambda^\text{t}_2\geq0.
\end{equation}

The optimal solution for {$\mathbf{F}_\text{t}^\text{BB}$} can be expressed as
\begin{equation}
\label{FtBB}
\begin{split}
&\hat{\mathbf{F}}_\text{t}^\text{BB}=((\mathbf{W}_\text{d}^\text{MMSE}\tilde{\mathbf{H}}_2\mathbf{G}_\text{r}^\text{BB}\tilde{\mathbf{H}}_1)^H\mathbf{V}\mathbf{W}_\text{d}^\text{MMSE}\tilde{\mathbf{H}}_2\mathbf{G}_\text{r}^\text{BB}\tilde{\mathbf{H}}_1 \\
&+\lambda^\text{t}_1(\mathbf{F}_\text{t}^\text{RF})^H\mathbf{F}_\text{t}^\text{RF}+ \lambda^\text{t}_2(\mathbf{F}_\text{r}^\text{RF}\mathbf{G}_\text{r}^\text{BB}\tilde{\mathbf{H}}_1)^H\mathbf{F}_\text{r}^\text{RF}\mathbf{G}_\text{r}^\text{BB}\tilde{\mathbf{H}}_1)^{-1}(\mathbf{V}\mathbf{W}_\text{d}^\text{MMSE}\tilde{\mathbf{H}}_2\mathbf{G}_\text{r}^\text{BB}\tilde{\mathbf{H}}_1)^H,
\end{split} 
\end{equation}
where {$\lambda^\text{t}_1$} and {$\lambda^\text{t}_2$} are the non-negative Lagrange multipliers corresponding to the power constraints. Similar with Problem (\ref{subPGr}), to obtain {$\lambda^\text{t}_1$} and {$\lambda^\text{t}_2$}, we consider four cases: i) if the power constraints are satisfied when {$\lambda^\text{t}_1=0$} and {$\lambda^\text{t}_2=0$}, we will set {$\lambda^\text{t}_1$} and {$\lambda^\text{t}_2$} equal to $0$; ii) if case (i) is not satisfied, we then set $\lambda^\text{t}_1=0$ and start a bisection search for $\lambda^\text{t}_2$ until the KKT condition (\ref{lagFtc4}) and the power constraint (\ref{lagFtc1}) are satisfied; iii) if  (\ref{lagFtc4}) and  (\ref{lagFtc1}) cannot be satisfied simultaneously through the bisection search for $\lambda^\text{t}_2$, we then set $\lambda^\text{t}_2=0$ and start a bisection search for $\lambda^\text{t}_1$ until (\ref{lagFtc2}) and (\ref{lagFtc3}) are satisfied;
iv) if we cannot find the appropriate $\lambda^\text{t}_1$ to satisfy (\ref{lagFtc2}) and (\ref{lagFtc3}) at the same time, we can obtain {$\lambda^\text{t}_1$} and {$\lambda^\text{t}_2$} through a two-layer bisection search. The search algorithm is described in Algorithm \ref{alg1}.
\begin{algorithm}[h]
\small
\caption{Two-layer bisection search for $\lambda^\text{t}_1$ and $\lambda^\text{t}_2$ }
\begin{algorithmic}[1]
\STATE initialize $\lambda^\text{t}_\text{1,min}=\lambda^\text{t}_\text{2,min}=0$, $\lambda^\text{t}_\text{1,max}$,$\lambda^\text{t}_\text{2,max}$;\
\WHILE{$\lambda^\text{t}_\text{1,max}-\lambda^\text{t}_\text{1,min}>\epsilon_1$} 
\STATE setting $\lambda^\text{t}_1 = \frac{\lambda^\text{t}_\text{1,min}+\lambda^\text{t}_\text{1,max}}{2}$;\
\WHILE {$\lambda^\text{t}_\text{2,max}-\lambda^\text{t}_\text{2,min}>\epsilon_2$} 
\STATE setting $\lambda^\text{t}_2 = \frac{\lambda^\text{t}_\text{2,min}+\lambda^\text{t}_\text{2,max}}{2}$;\ 
\STATE calculate $\mathbf{F}_\text{t}^\text{BB}$ according to (\ref{FtBB});\
\IF{$\E[\left\|\mathbf{F}_\text{r}^\text{RF}\mathbf{G}_\text{r}^\text{BB}\tilde{\mathbf{H}}_1\mathbf{F}_\text{t}^\text{BB}s+\mathbf{F}_\text{r}^\text{RF}\mathbf{G}_\text{r}^\text{BB}\mathbf{W}_\text{r}^\text{RF}\bm{n}_1\right\|^2_F]\leq \text{E}_\text{r}$}
\STATE $\lambda^\text{t}_\text{2,max} = \lambda^\text{t}_2$;\
\ENDIF
\IF{$\E[\left\|\mathbf{F}_\text{r}^\text{RF}\mathbf{G}_\text{r}^\text{BB}\tilde{\mathbf{H}}_1\mathbf{F}_\text{t}^\text{BB}s+\mathbf{F}_\text{r}^\text{RF}\mathbf{G}_\text{r}^\text{BB}\mathbf{W}_\text{r}^\text{RF}\bm{n}_1\right\|^2_F]\geq \text{E}_\text{r}$}
\STATE $\lambda^\text{t}_\text{2,min} = \lambda^\text{t}_2$;\
\ENDIF
\ENDWHILE
\STATE  calculate $\mathbf{F}_\text{t}^\text{BB}$ according to (\ref{FtBB});\
\IF{$\left\|\mathbf{F}_\text{t}^\text{RF}\mathbf{F}_\text{t}^\text{BB}\right\|^2_F\leq \text{E}_\text{t}$}
\STATE $\lambda^\text{t}_\text{1,max} = \lambda^\text{t}_1$;\
\ENDIF
\IF{$\left\|\mathbf{F}_\text{t}^\text{RF}\mathbf{F}_\text{t}^\text{BB}\right\|^2_F\geq \text{E}_\text{t}$}
\STATE $\lambda^\text{t}_\text{1,min} = \lambda^\text{t}_1$;\
\ENDIF
\ENDWHILE
\end{algorithmic}
\label{alg1}
\end{algorithm}
\end{enumerate}
The entire design procedures for $\mathbf{F}^\text{BB}_\text{t}$ and $\mathbf{G}^\text{BB}_\text{r}$ are summarized in  Algorithm \ref{alg2}.
\begin{algorithm}[h]
\small
\caption{Design for $\mathbf{F}^\text{BB}_\text{t}$ and $\mathbf{G}^\text{BB}_\text{r}$ }
\begin{algorithmic}[1]
\STATE Initialize $\mathbf{F}^{\text{BB}(0)}_\text{t}$ and $\mathbf{G}^{\text{BB}(0)}_{\text{r}}$;
\STATE Set $k=1$; 
\WHILE{$|I(\bm{s},\bm{\tilde{y}}_\text{d})^{(k)}-I(\bm{s},\bm{\tilde{y}}_\text{d})^{(k-1)}|>\epsilon$} 
\STATE Calculate the MMSE receiver $\mathbf{W}_\text{d}^{\text{MMSE}(k)}$ according to (\ref{mmsebb}) and the MSE matrix {$\mathbf{E}_\text{MMSE}^{(k)}$} in Eq. (\ref{EMMSE});
\STATE Update {$\mathbf{V}$} by setting  {$\mathbf{V} = \frac{(\mathbf{E}_\text{MMSE}^{(k)})^{-1}}{\log2}$};
\STATE Calculate $\mathbf{F}^{\text{BB}(k)}_\text{t}$ as Step III illustrates;
\STATE Calculate $\mathbf{G}^{\text{BB}(k)}_\text{r}$ as Step IV illustrates;
\STATE $k=k+1$;
\ENDWHILE
\end{algorithmic}
\label{alg2}
\end{algorithm}

\subsection{Convergence analysis}
Since the constant weight matrix {$\mathbf{V}$} changes at each iteration, it does not generate a monotonic decreasing sequence, which means we cannot directly prove the convergence of the proposed algorithm. Fortunately,  according to \cite{christensen2008weighted}, the iteration procedure to maximize the mutual information through minimizing WMMSE is the same optimization procedure for an equivalent optimization problem as below
\begin{equation}
\label{equivProb}
\begin{split}
\min_{\substack{\mathbf{F}_\text{t}^\text{BB},\mathbf{G}_\text{r}^\text{BB},\\\mathbf{V},\mathbf{W}_\text{d}^\text{BB}}} &\Tr(\mathbf{V}\mathbf{E}_\text{MMSE}) - \log\det(\mathbf{V})\\
s.t. ~&\left\|\mathbf{F}_\text{t}^\text{RF}\mathbf{F}_\text{t}^\text{BB}\right\|^2_F \leq \text{E}_\text{t},\\
&\E[\left\|\mathbf{F}_\text{r}^\text{RF}\mathbf{G}_\text{r}^\text{BB}\tilde{\mathbf{H}}_1\mathbf{F}_\text{t}^\text{BB}\bm{s}+\mathbf{F}_\text{r}^\text{RF}\mathbf{G}_\text{r}^\text{BB}\mathbf{W}_\text{r}^\text{RF}\bm{n}_1\right\|^2_F] \leq \text{E}_\text{r}.
\end{split}
\end{equation}
The proof of the equivalence is similar to the proof in \cite{christensen2008weighted} and we omit the detailed prove for brevity. The main idea is that the alternating minimization of the objective in (\ref{equivProb}) corresponds to Steps 1-4 in our proposed algorithm. For example, when $\mathbf{F}_\text{t}^\text{BB}$,$\mathbf{G}_\text{r}^\text{BB}$ and $\mathbf{V}$ are fixed, optimizing (\ref{equivProb}) w.r.t. $\mathbf{W}_\text{d}^\text{BB}$ becomes minimizing MMSE, which gives the same result as Step 1. When $\mathbf{F}_\text{t}^\text{BB}$,$\mathbf{G}_\text{r}^\text{BB}$ and $\mathbf{W}_\text{d}^\text{BB}$ are fixed, the optimal solution for $\mathbf{V}$ which minimizes the objective $\Tr(\mathbf{V}\mathbf{E}_\text{MMSE}) - \log\det(\mathbf{V})$ in (\ref{equivProb}) is the same as Step 2. 

Based on this equivalence, we can prove the convergence of the proposed algorithm by proving the monotonic convergence of Problem (\ref{equivProb}). According to \cite{christensen2008weighted}, the objective in Problem (\ref{equivProb}) has a lower bound, which is the negative of the maximum mutual information. Due to the alternating minimization process, the objective in Problem (\ref{equivProb}) decreases monotonically. Since a sequence of monotonically decreasing numbers with a lower bound converges, Problem (\ref{equivProb}) converges and so does our proposed algorithm.

\subsection{Complexity analysis}
Since we provide closed-form solutions for each iteration, the main complexity lies in the search for the appropriate Lagrange multipliers. Let us define the search accuracy as $\epsilon$. This is a relative measure for the search interval. For example, if the length of our search interval is $l$, then the threshold for the search termination is set to be $\epsilon l$. Based on the accuracy $\epsilon$, the number of iterations in the bisection search in Step 3 is bounded by $\mathcal{O}(\log_2\frac{1}{\epsilon})$. In Step 4, we use a two-layer bisection search, whose number of iteration is  bounded by $\mathcal{O}(\log_2^2\frac{1}{\epsilon})$. So, for each outer iteration, the total number of inner iterations is  $\mathcal{O}(\log_2\frac{1}{\epsilon})+\mathcal{O}(\log_2^2\frac{1}{\epsilon})$. Compared with the algorithm in \cite{xue2018relay}, for which the number of inner iterations is $\mathcal{O}(2(2\text{N}_\text{RF}\text{N}_\text{r})^{2.5}\log\frac{1}{\epsilon})$ for each outer iteration, the complexity of our algorithm is much lower especially for large antenna arrays.

\subsection{$\mathbf{W}_\text{d}^\text{BB}$ design}
Since {$I(\bm{s},\bm{y}_\text{d}) \le I(\bm{s},\tilde{\bm{y}}_\text{d})$} \cite{gray2011entropy}, after we find the maximum {$I(\bm{s},\tilde{\bm{y}}_\text{d})$}, the optimal {$I(\bm{s},\bm{y}_\text{d})$} will be obtained if the destination node baseband processing does not cause any information loss. According to \cite{tse2005fundamentals}, MMSE-SIC is information lossless. Therefore, we use MMSE-SIC for our  destination baseband design. To simplify the expression, let us define 
\begin{equation}
\tilde{\bm{y}}_\text{d} =\tilde{\mathbf{H}}_2\mathbf{G}_\text{r}^\text{BB}\tilde{\mathbf{H}}_1\mathbf{F}_\text{t}^\text{BB}\bm{s}+\tilde{\mathbf{H}}_2\mathbf{G}_\text{r}^\text{BB}\tilde{\bm{n}}_1+\tilde{\bm{n}}_2 = \mathbf{G}\bm{s}+\bar{\bm{v}},
\end{equation}
where {$\mathbf{G}=[\bm{g}_1,...,\bm{g}_{\text{N}_\text{s}}] \in \mathbf{C}^{\text{N}_\text{RF}\times \text{N}_\text{s}}$, $\bar{\bm{v}}$} is the colored noise with covariance matrix {$\mathbf{R}_{\tilde{\text{n}}}$}.

To implement the MMSE-SIC for the $k^\text{th}$ stream, we subtract the effect of the first $k-1$ streams from the output and obtain
\begin{equation}
\tilde{\bm{y}}_\text{d'} = \tilde{\bm{y}}_\text{d} - \sum_{i=1}^{k-1}\bm{g}_is_i + \bar{\bm{v}} = \bm{g}_ks_k + \sum_{j=k+1}^{\text{N}_\text{s}}\bm{g}_js_j + \bar{\bm{v}}.
\end{equation}
Define {$\mathbf{W}_\text{d}^\text{BB}= [\bm{w}_1,...,\bm{w}_{\text{N}_\text{s}}]^H$}, the baseband filter for the $k^{th}$ stream is derived as the MMSE filter:
\begin{equation}
\bm{w}_k^H = \bm{g}_k^H(\sum_{j=k+1}^{\text{N}_\text{s}}\bm{g}_j\bm{g}_j^H+\mathbf{R}_{\tilde{\text{n}}})^{-1}.
\end{equation}

\section{Robust Design}
So far, we have designed the mmWave relay precoders/combiners under the perfect channel information. However, it is hard to avoid estimation/quantization errors while obtaining the channel information. To study the effects of imperfect channel estimation, we adopt the model provided in \cite{zhang2008statistically,rong2011robust, xing2010robust}. In this model, the relationship between the channel values and the corresponding estimated channel values are:
\begin{equation}
\label{imchan1}
\mathbf{H}_1 = \mathbf{\bar{H}}_1 + \Phi_1^{\frac{1}{2}}\Delta_1\Theta_1^{\frac{1}{2}},
\end{equation}
\begin{equation}
\label{imchan2}
\mathbf{H}_2 = \mathbf{\bar{H}}_2 + \Phi_2^{\frac{1}{2}}\Delta_2\Theta_2^{\frac{1}{2}},
\end{equation}
where $\mathbf{H}_1$ and $\mathbf{H}_2$ are the actual channel matrices, i.e., the channels that we cannot precisely estimate,  and $\mathbf{\bar{H}}_1$ and $\mathbf{\bar{H}}_2$ are the estimated channels. The transmitting covariance matrix of the channel estimation error at the source node and the relay node are denoted by $\Theta_1 \in \mathbb{C}^{\text{N}_\text{t}\times \text{N}_\text{t}}$ and $\Theta_2\in \mathbb{C}^{\text{N}_\text{r}\times \text{N}_\text{r}}$, respectively. The receiving covariance matrix of the channel estimation error at the relay node and the destination node are denoted by $\Phi_1\in \mathbb{C}^{\text{N}_\text{r}\times \text{N}_\text{r}}$ and $\Phi_2 \in\mathbb{C}^{\text{N}_\text{d}\times \text{N}_\text{d}}$, respectively. $\Delta_1$ and $\Delta_2$ are Gaussian random matrices with independent and identically distributed (i.i.d.) zero mean and unit variance entries and are the unknown parts of the CSI mismatch.  we adopt the exponential model\cite{xing2010robust,rong2011robust} for the channel estimation error covariance matrices $\Phi_1$, $\Theta_1$, $\Phi_2$ and $\Theta_2$. To be specific, the entries of the matrices are given as $\Phi_1(i,j)=\sigma_{e,1}^2\beta_1^{|i-j|}$, $\Theta_1(i,j) = \alpha_1^{|i-j|}$, $\Phi_2(i,j)=\sigma_{e,2}^2\beta_2^{|i-j|}$ and $\Theta_2(i,j) = \alpha_2^{|i-j|}$, where $\alpha_1$, $\beta_1$, $\alpha_2$ and $\beta_2$ are the correlation coefficients and $\sigma_{e,1}^2$ and $\sigma_{e,2}^2$ denote the estimation error covariance. For simplicity, we assume $\alpha_1=\alpha_2=\alpha$, $\beta_1=\beta_2=\beta$ and $\sigma_{e,1}^2=\sigma_{e,2}^2=\sigma_e^2$.

As shown in Section \uppercase\expandafter{\romannumeral5}, the imperfect channel information will result in severe performance degradation. For example, the achievable data rate of \cite{single_relay_mmwave,xue2018relay} can be decreased to half of what it is for the perfect CSI.

To further increase the robustness of our proposed algorithm, in this section, we will propose a robust precoding/combining design for the mmWave relay system based on our proposed WMMSE algorithm. 

\subsection{RF design}
Recall that our RF precoding/combining is actually phase compensation for each eigenmode. The eigenmodes are obtained through SVD. When considering the imperfect CSI, the phase of each eigenmode cannot be precisely acquired. Let us take the actual channel $\mathbf{H}_1$ and the estimated $\bar{\mathbf{H}}_1$ as an example. The left singular matrices of $\mathbf{H}_1$ and $\bar{\mathbf{H}}_1$ are denoted by $\mathbf{U}_1$ and $\bar{\mathbf{U}}_1$. We denote the phase of each entry in $\mathbf{U}_1$ and $\bar{\mathbf{U}}_1$ by $\theta_{i,j}$ and $\bar{\theta}_{i,j}$, respectively. The phase difference in each entry can be calculated as $\Delta\theta_{i,j} =\theta_{i,j}-\bar{\theta}_{i,j}$. Let us assume that $\Delta\theta_{i,j}$ has a distribution $p(\Delta\theta)$. We want to make an estimation on $\Delta\theta_{i,j}$ to minimize the mean square estimation error $\E[(\Delta\hat{\theta}_{i,j}-\Delta\theta_{i,j})^2]$. The estimation $\Delta\hat{\theta}_{i,j}=\E[\Delta\theta_{i,j}]$ can be calculated based on the distribution $p(\Delta\theta)$. Note that we can only obtain the estimated channel. Once we calculate $\Delta\hat{\theta}_{i,j}$, we can calibrate the phase of each entry in $\bar{\mathbf{U}}_1$ as $\hat{\theta}_{i,j}=\bar{\theta}_{i,j}+\Delta\hat{\theta}_{i,j}$. Using the same approach, we can calibrate the phase of singular matrices of $\bar{\mathbf{H}}_1$ and $\bar{\mathbf{H}}_2$. Then, based on (\ref{c1}) and (\ref{c2}), we can calculate the RF precoders and combiners $\mathbf{\bar{F}}_t^\text{RF}$, $\mathbf{\bar{W}}_r^\text{RF}$, $\mathbf{\bar{F}}_\text{r}^\text{RF}$ and $\mathbf{\bar{W}}_\text{d}^\text{RF}$ based on the calibrated singular matrices of $\bar{\mathbf{H}}_1$ and $\bar{\mathbf{H}}_2$.

As we analyzed above, to calculate the RF precoders and combiners, we must  know the distribution of $\Delta\theta_{i,j}$ to make the estimation $\Delta\hat{\theta}_{i,j}=\E[\Delta\theta_{i,j}]$. However, the theoretical analysis for the phase distribution is intractable. To obtain the phase distribution, we simulate 100 channel realizations based on the imperfect channel models in (\ref{imchan1}), where we set  $\text{N}_\text{r}=32$, $\text{N}_\text{t}=48$ and $L=20$. We adopt the correlation model from \cite{zhang2008statistically,rong2011robust, xing2010robust} where the entries of the correlation matrices are selected as $\Phi_1(i,j)=\sigma_{e,1}^2\beta_1^{|i-j|}$, $\Theta_1(i,j) = \alpha_1^{|i-j|}$. In the simulation, we set $\alpha_1=0$, $\beta_1=0$ and $\sigma_{e,1}^2=0.1$.

We collect the phase difference in each matrix entry from 60 simulations. In Fig. \ref{robust_theta}, we plot the simulated probability density function (PDF) of the $\Delta \theta$ in solid line. We use a generalized normal distribution \cite{nadarajah2005generalized} to approximate the distribution. The PDF of a generalized normal distribution is expressed as $f(x)=\frac{\beta}{2\alpha\Gamma(\frac{1}{\beta})}e^{-(|x-\mu|/\alpha)^\beta}$. We can see the approaching effect of different value of the shaping parameter $\beta$ in Fig. \ref{robust_theta}. We use Kullback-Leibler distance as a performance measure for the approximation, which is calculated by $D_\text{KL}(\bm{Y}||\bm{X})=\sum_{i=1}^N\log(\frac{Y_i}{X_i})Y_i$ where $\bm{Y}$ and $\bm{X}$ are the probability distributions. The lower the Kullback-Leibler distance, the closer the two distributions are. Note that the absolute value of KL-distance varies when the number of total points (i.e., $N$) changes. In our simulation, the best approximation comes with the one with $\beta=2$, since it has the lowest Kullback-Leibler distance. When $\beta=2$, the generalized normal distribution in Fig. \ref{robust_theta} is a Gaussian distribution with 0 mean, which means we can estimate $\Delta\hat{\theta}_{i,j}=0$. 

\begin{figure}[htbp]
\centering
\includegraphics[width=10cm]{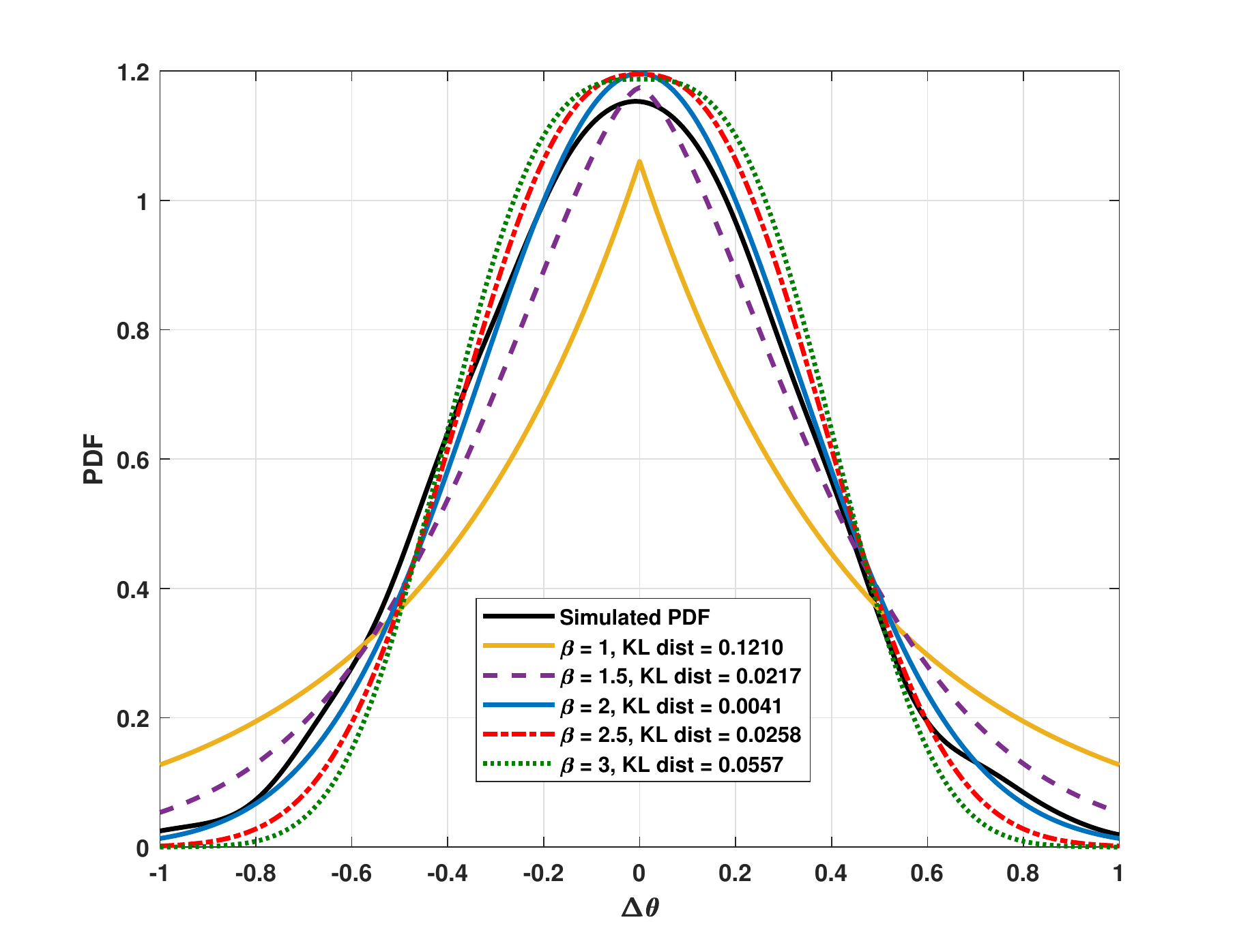}
\caption{ Approximation of the Simulated PDF}
\label{robust_theta}
\vspace*{-1cm}
\end{figure}

\subsection{Baseband design} 

Based on the RF precoders and combiners $\mathbf{\bar{F}}_\text{t}^\text{RF}$, $\mathbf{\bar{W}}_\text{r}^\text{RF}$, $\mathbf{\bar{F}}_\text{r}^\text{RF}$ and $\mathbf{\bar{W}}_\text{d}^\text{RF}$ designed in the last subsection, the equivalent baseband channels after the RF processing are
\begin{equation}
\label{robchannel1}
\begin{split}
\mathbf{\tilde{H}}_1 &= \mathbf{\bar{W}}_\text{r}^\text{RF}\mathbf{H}_1\mathbf{\bar{F}}_\text{t}^\text{RF}=\mathbf{\bar{W}}_\text{r}^\text{RF}\mathbf{\bar{H}}_1\mathbf{\bar{F}}_\text{t}^\text{RF} + \mathbf{\bar{W}}_\text{r}^\text{RF}\Phi_1^{\frac{1}{2}}\Delta_1\Theta_1^{\frac{1}{2}}\mathbf{\bar{F}}_\text{t}^\text{RF}\\
&=\mathbf{\tilde{\bar{H}}}_1 + \tilde{\Phi}_1^{\frac{1}{2}}\Delta_1\tilde{\Theta}_1^{\frac{1}{2}},
\end{split}
\end{equation}
\begin{equation}
\label{robchannel2}
\begin{split}
\mathbf{\tilde{H}}_2 &= \mathbf{\bar{W}}_\text{d}^\text{RF}\mathbf{H}_2\mathbf{\bar{F}}_\text{r}^\text{RF}=\mathbf{\bar{W}}_\text{d}^\text{RF}\mathbf{\bar{H}}_2\mathbf{\bar{F}}_\text{r}^\text{RF} + \mathbf{\bar{W}}_\text{d}^\text{RF}\Phi_2^{\frac{1}{2}}\Delta_2\Theta_2^{\frac{1}{2}}\mathbf{\bar{F}}_\text{r}^\text{RF}\\
&=\mathbf{\tilde{\bar{H}}}_2 + \tilde{\Phi}_2^{\frac{1}{2}}\Delta_2\tilde{\Theta}_2^{\frac{1}{2}},
\end{split}
\end{equation}
where we denote the true equivalent baseband channels by $\mathbf{\tilde{H}}_1$ and $\mathbf{\tilde{H}}_2$. The estimated equivalent baseband channels are denoted by $\mathbf{\tilde{\bar{H}}}_1$ and $\mathbf{\tilde{\bar{H}}}_2$, which are the channels we obtain at the source node. We define $\tilde{\Phi}_1:=\mathbf{\bar{W}}_\text{r}^\text{RF}\Phi_1(\mathbf{\bar{W}}_\text{r}^\text{RF})^H$, $\tilde{\Phi}_2:=\mathbf{\bar{W}}_\text{d}^\text{RF}\Phi_2(\mathbf{\bar{W}}_\text{d}^\text{RF})^H$, $\tilde{\Theta}_1:=(\mathbf{\bar{F}}_\text{t}^\text{RF})^H\Theta_1\mathbf{\bar{F}}_\text{t}^\text{RF}$ and $\tilde{\Theta}_2:=(\mathbf{\bar{F}}_\text{r}^\text{RF})^H\Theta_2\mathbf{\bar{F}}_\text{r}^\text{RF}$. 

For (\ref{robchannel1}) and (\ref{robchannel2}) , we have the following properties \cite{gupta2018matrix} (Using $\mathbf{\tilde{H}}_1$ as an example)
\begin{equation}
\label{property1}
\E_{\Delta_1}[\mathbf{\tilde{H}}_1\mathbf{C}\mathbf{\tilde{H}}_1^H] = \mathbf{\tilde{\bar{H}}}_1\mathbf{C} \mathbf{\tilde{\bar{H}}}_1^H + \Tr(\mathbf{C}\tilde{\Theta}_1)\tilde{\Phi}_1,
\end{equation}

\begin{equation}
\label{property2}
\E_{\Delta_1}[\mathbf{\tilde{H}}_1^H\mathbf{C}\mathbf{\tilde{H}}_1] = \mathbf{\tilde{\bar{H}}}_1^H\mathbf{C} \mathbf{\tilde{\bar{H}}}_1 + \Tr(\tilde{\Phi}_1\mathbf{C})\tilde{\Theta}_1.
\end{equation}

To design a robust baseband system, we need to redesign the algorithm in Section III based on the imperfect channel models (\ref{robchannel1}) and (\ref{robchannel2}). The main idea is similar, i.e., that we first optimize the baseband filters $\mathbf{\bar{F}}_\text{t}^\text{BB}$ and $\mathbf{\bar{G}}_\text{r}^\text{BB}$ to maximize the average $\E_{\Delta_1,\Delta_2}[I(\bm{s},\tilde{\bm{y}}_\text{d})]$ , and then we use MMSE-SIC for $\mathbf{\bar{W}}_\text{d}^\text{BB}$. Note that we denote the baseband precoder/combiner based on the estimated equivalent baseband channels $\mathbf{\tilde{\bar{H}}}_1$ and $\mathbf{\tilde{\bar{H}}}_2$ by $\mathbf{\bar{F}}_\text{t}^\text{BB}$, $\mathbf{\bar{G}}_\text{r}^\text{BB}$ and $\mathbf{\bar{W}}_\text{d}^\text{BB}$.
The main challenge here is if there still exists an equivalent relationship between the average mutual information maximization and the WMMSE minimization.

To derive the equivalent relationship, we first derive an upper bound for the average mutual information $\E_{\Delta_1,\Delta_2}[I(\bm{s},\tilde{\bm{y}}_\text{d})]$ as
\begin{equation}
\label{avgmutual}
\begin{split}
&\E^\text{UB}_{\Delta_1,\Delta_2}[I(\bm{s},\tilde{\bm{y}}_\text{d})] = \log_2\det(\E_{\Delta_1,\Delta_2}[\mathbf{I}_{\text{N}_\text{s}}+(\tilde{\mathbf{H}}_2\mathbf{\bar{G}}_\text{r}^\text{BB}\tilde{\mathbf{H}}_1\mathbf{\bar{F}}_\text{t}^\text{BB})^H\mathbf{R}_{\tilde{\text{n}}}^{-1}\tilde{\mathbf{H}}_2\mathbf{\bar{G}}_\text{r}^\text{BB}\tilde{\mathbf{H}}_1\mathbf{\bar{F}}_\text{t}^\text{BB}])\\
&=\log_2\det(\mathbf{I}_{\text{N}_\text{s}}+(\tilde{\bar{\mathbf{H}}}_2\mathbf{\bar{G}}_\text{r}^\text{BB}\tilde{\bar{\mathbf{H}}}_1\mathbf{\bar{F}}_\text{t}^\text{BB})^H\mathbf{R}_{\tilde{\bar{\text{n}}}}^{-1}\tilde{\bar{\mathbf{H}}}_2\mathbf{\bar{G}}_\text{r}^\text{BB}\tilde{\bar{\mathbf{H}}}_1\mathbf{\bar{F}}_\text{t}^\text{BB}+\mathbf{B}_1+\mathbf{B}_2),
\end{split}
\end{equation}
where 
\[
\mathbf{R}_{\tilde{\bar{\text{n}}}}:=\sigma_2^2\mathbf{\bar{W}}_\text{r}^\text{RF}(\mathbf{\bar{W}}_\text{r}^\text{RF})^H +\sigma_1^2(\mathbf{\tilde{\bar{H}}}_2\mathbf{\bar{G}}_\text{r}^\text{BB}\mathbf{\bar{W}}_\text{r}^\text{RF}(\mathbf{\tilde{\bar{H}}}_2\mathbf{\bar{G}}_\text{r}^\text{BB}\mathbf{\bar{W}}_\text{r}^\text{RF})^H+\Tr(\mathbf{\bar{G}}_\text{r}^\text{BB}\mathbf{\bar{W}}_\text{r}^\text{RF}(\mathbf{\bar{G}}_\text{r}^\text{BB}\mathbf{\bar{W}}_\text{r}^\text{RF})^H)\tilde{\Theta}_2)\tilde{\Phi}_2),
\]
\[
\mathbf{B}_1:=(\mathbf{\bar{G}}_\text{r}^\text{BB}\tilde{\mathbf{H}}_1\mathbf{\bar{F}}_\text{t}^\text{BB})^H\Tr(\tilde{\Phi}_2\mathbf{R}_{\tilde{\bar{\text{n}}}})\tilde{\Theta}_2\mathbf{\bar{G}}_\text{r}^\text{BB}\tilde{\mathbf{H}}_1\mathbf{\bar{F}}_\text{t}^\text{BB},
\]
\[
\mathbf{B}_2:=(\mathbf{\bar{F}}_\text{t}^\text{BB})^H\Tr(\tilde{\Phi}_1((\tilde{\bar{\mathbf{H}}}_2\mathbf{\bar{G}}_\text{r}^\text{BB})^H\mathbf{R}_{\tilde{\bar{\text{n}}}}\tilde{\bar{\mathbf{H}}}_2\mathbf{\bar{G}}_\text{r}^\text{BB}+(\mathbf{\bar{G}}_\text{r}^\text{BB})^H\Tr(\tilde{\Phi}_2\mathbf{R}_{\tilde{\bar{\text{n}}}})\tilde{\Theta}_2)\mathbf{\bar{G}}_\text{r}^\text{BB})\tilde{\Theta}_1\mathbf{\bar{F}}_\text{t}^\text{BB}.
\]

Then, we need to derive the expression of $\E_{\Delta_1,\Delta_2}[\mathbf{E}_\text{MMSE}] $. According to (\ref{EMMSE1}), we have
\begin{equation}
\label{robEMMSE}
\begin{split}
&\E_{\Delta_1,\Delta_2}[\mathbf{E}_\text{MMSE}] = \E_{\Delta_1,\Delta_2}[(\mathbf{I}_{\text{N}_\text{s}}+(\tilde{\mathbf{H}}_2\mathbf{\bar{G}}_\text{r}^\text{BB}\tilde{\mathbf{H}}_1\mathbf{\bar{F}}_\text{t}^\text{BB})^H\mathbf{R}_{\tilde{\text{n}}}^{-1}\tilde{\mathbf{H}}_2\mathbf{\bar{G}}_\text{r}^\text{BB}\tilde{\mathbf{H}}_1\mathbf{\bar{F}}_\text{t}^\text{BB})^{-1}]\\
&=(\mathbf{I}_{\text{N}_\text{s}}+(\tilde{\bar{\mathbf{H}}}_2\mathbf{\bar{G}}_\text{r}^\text{BB}\tilde{\bar{\mathbf{H}}}_1\mathbf{\bar{F}}_\text{t}^\text{BB})^H\mathbf{R}_{\tilde{\bar{\text{n}}}}^{-1}\tilde{\bar{\mathbf{H}}}_2\mathbf{\bar{G}}_\text{r}^\text{BB}\tilde{\bar{\mathbf{H}}}_1\mathbf{\bar{F}}_\text{t}^\text{BB}+\mathbf{B}_1+\mathbf{B}_2)^{-1}.
\end{split}
\end{equation}
Equation (\ref{robEMMSE}) implies that the relationship $\E^\text{UB}_{\Delta_1,\Delta_2}[I(\bm{s},\tilde{\bm{y}}_d)] = \log_2\det(\E_{\Delta_1,\Delta_2}[\mathbf{E}_\text{MMSE}]^{-1})$ still holds for the imperfect channel model, which means we can maximize the upper bound of the average mutual information through the WMMSE minimization as discussed in Section III.

The expression of $\E_{\Delta_1,\Delta_2}[\mathbf{E}_\text{MMSE}] $ in (\ref{robEMMSE}) includes a matrix inverse operator, which complicates the following calculation for $\mathbf{\bar{F}}_\text{t}^\text{BB}$ and $\mathbf{\bar{G}}_\text{r}^\text{BB}$. To derive a simpler expression for $\E_{\Delta_1,\Delta_2}[\mathbf{E}_\text{MMSE}]$, we first calculate the average MSE matrix $\E_{\Delta_1,\Delta_2}[\mathbf{E}_\text{MSE}]$. The MSE matrix  is given by
\begin{equation}
\label{EMSE}
\begin{split}
&\E_{\Delta_1,\Delta_2}[\mathbf{E}_\text{MSE}] = \\
&\mathbf{\bar{W}}_\text{d}^\text{BB}(\mathbf{A}+\mathbf{R}_{\tilde{\bar{\text{n}}}}) (\mathbf{\bar{W}}_\text{d}^\text{BB})^H - (\mathbf{\bar{W}}_\text{d}^\text{BB})^H\mathbf{\tilde{\bar{H}}}_2\mathbf{\bar{G}}_\text{r}^\text{BB}\mathbf{\tilde{\bar{H}}}_1\mathbf{\bar{F}}_\text{t}^\text{BB}-(\mathbf{\tilde{\bar{H}}}_2\mathbf{\bar{G}}_\text{r}^\text{BB}\mathbf{\tilde{\bar{H}}}_1\mathbf{\bar{F}}_\text{t}^\text{BB})^H\mathbf{\bar{W}}_\text{d}^\text{BB}+\mathbf{I}_{\text{N}_\text{s}},
\end{split}
\end{equation}
where 
\[
\mathbf{A}:=\mathbf{\tilde{\bar{H}}}_2\mathbf{\bar{G}}_\text{r}^\text{BB}\mathbf{A}_1(\mathbf{\tilde{\bar{H}}}_2\mathbf{\bar{G}}_\text{r}^\text{BB})^H+\Tr(\mathbf{\bar{G}}_\text{r}^\text{BB}\mathbf{A}_1(\mathbf{\bar{G}}_\text{r}^\text{BB})^H\tilde{\Theta}_2^H)\tilde{\Phi}_2,
\]
\[
\mathbf{A}_1:=\mathbf{\tilde{\bar{H}}}_1\mathbf{\bar{F}}_\text{t}^\text{BB}(\mathbf{\tilde{\bar{H}}}_1\mathbf{\bar{F}}_\text{t}^\text{BB})^H+\Tr(\mathbf{\bar{F}}_\text{t}^\text{BB}(\mathbf{\bar{F}}_\text{t}^\text{BB})^H\tilde{\Theta}_1^H)\tilde{\Phi}_1.
\]

Based on (\ref{EMSE}), we can derive the $\mathbf{\bar{W}}_\text{d}^\text{MMSE}$, which minimizes $\E_{\Delta_1,\Delta_2}[\mathbf{E}_\text{MSE}]$, as
\begin{equation}
\label{robwdmmse}
\mathbf{\bar{W}}_\text{d}^\text{MMSE}=(\mathbf{\tilde{\bar{H}}}_2\mathbf{\bar{G}}_\text{r}^\text{BB}\mathbf{\tilde{\bar{H}}}_1\mathbf{\bar{F}}_\text{t}^\text{BB})^H(\mathbf{A}+\mathbf{R}_{\tilde{\bar{\text{n}}}})^{-1}.
\end{equation}

Substituting (\ref{robwdmmse}) into (\ref{EMSE}), we have
\begin{equation}
\label{robmmse}
\begin{split}
&\E_{\Delta_1,\Delta_2}[\mathbf{E}_\text{MMSE}] \\
&= \mathbf{\bar{W}}_\text{d}^\text{MMSE}(\mathbf{A}+\mathbf{R}_{\tilde{\bar{\text{n}}}}) (\mathbf{\bar{W}}_\text{d}^\text{MMSE})^H - (\mathbf{\bar{W}}_\text{d}^\text{MMSE})^H\mathbf{\tilde{\bar{H}}}_2\mathbf{\bar{G}}_\text{r}^\text{BB}\mathbf{\tilde{\bar{H}}}_1\mathbf{\bar{F}}_\text{t}^\text{BB}-(\mathbf{\tilde{\bar{H}}}_2\mathbf{\bar{G}}_\text{r}^\text{BB}\mathbf{\tilde{\bar{H}}}_1\mathbf{\bar{F}}_\text{t}^\text{BB})^H\mathbf{\bar{W}}_\text{d}^\text{BB}+\mathbf{I}_{\text{N}_\text{s}}\\
&=\mathbf{I}_{\text{N}_\text{s}}-(\mathbf{\tilde{\bar{H}}}_2\mathbf{\bar{G}}_\text{r}^\text{BB}\mathbf{\tilde{\bar{H}}}_1\mathbf{\bar{F}}_\text{t}^\text{BB})^H(\mathbf{A}+\mathbf{R}_{\tilde{\bar{\text{n}}}})^{-1}\mathbf{\tilde{\bar{H}}}_2\mathbf{\bar{G}}_\text{r}^\text{BB}\mathbf{\tilde{\bar{H}}}_1\mathbf{\bar{F}}_\text{t}^\text{BB}.
\end{split}
\end{equation}

Based on (\ref{robmmse}), we can amend our results based on the imperfect channel model, using the same procedure as Section III. For $\mathbf{\bar{G}}_\text{r}^\text{BB}$, the amended expression is 
\begin{equation}
\label{robGrBB}
\mathbf{\bar{G}}_\text{r}^\text{BB} = (\mathbf{K}_1+\lambda^r(\mathbf{\bar{F}}_\text{r}^\text{RF})^H\mathbf{\bar{F}}_\text{r}^\text{RF})^{-1}(\mathbf{\tilde{\bar{H}}}_2)^H(\mathbf{\bar{W}}_\text{d}^\text{MMSE})^H\mathbf{\bar{V}}(\mathbf{\bar{F}}_\text{t}^\text{BB})^H(\mathbf{K}_2+\sigma_1^2\mathbf{\bar{W}}_\text{r}^\text{RF}(\mathbf{\bar{W}}_\text{r}^\text{RF})^H)^{-1},
\end{equation}
where 
\[
\mathbf{\bar{V}}=\frac{\E_{\Delta_1,\Delta_2}[\mathbf{E}_\text{MMSE}]^{-1}}{\log2},
\]
\[
\mathbf{K}_1 := (\mathbf{\bar{W}}_\text{d}^\text{MMSE}\mathbf{\tilde{\bar{H}}}_2)^H\mathbf{\bar{V}}\mathbf{\bar{W}}_\text{d}^\text{MMSE}\mathbf{\tilde{\bar{H}}}_2+\Tr(\tilde{\Phi}_2(\mathbf{\bar{W}}_\text{d}^\text{MMSE})^H\mathbf{\bar{V}}\mathbf{\bar{W}}_\text{d}^\text{MMSE})\tilde{\Theta}_2,
\]
\[
\mathbf{K}_2 := \mathbf{\tilde{\bar{H}}}_1\mathbf{\bar{F}}_\text{t}^\text{BB}( \mathbf{\tilde{\bar{H}}}_1\mathbf{\bar{F}}_\text{t}^\text{BB})^H+\Tr(\mathbf{\bar{F}}_\text{t}^\text{BB}(\mathbf{\bar{F}}_\text{t}^\text{BB})^H\tilde{\Theta}_1)\tilde{\Phi}_1.
\]

For $\bar{\mathbf{F}}_\text{t}^\text{BB}$, the amended expression is 
\begin{equation}
\label{robFtBB}
\bar{\mathbf{F}}_\text{t}^\text{BB}=(\mathbf{T}_1+\lambda^t_1(\mathbf{\bar{F}}_\text{t}^\text{RF})^H\mathbf{\bar{F}}_\text{t}^\text{RF}+ \lambda^t_2\mathbf{T}_2)^{-1}(\mathbf{\bar{V}}\mathbf{\bar{W}}_\text{d}^\text{MMSE}\tilde{\bar{\mathbf{H}}}_2\mathbf{\bar{G}}_\text{r}^\text{BB}\tilde{\mathbf{\bar{H}}}_1)^H,
\end{equation}
where
\[
\mathbf{T}_1 :=\mathbf{\tilde{\bar{H}}}_1^H(\mathbf{\bar{G}}_\text{r}^\text{BB})^H\mathbf{B}\mathbf{\bar{G}}_\text{r}^\text{BB}\mathbf{\tilde{\bar{H}}}_1+\Tr(\tilde{\Phi}_1(\mathbf{\bar{G}}_\text{r}^\text{BB})^H\mathbf{B}\mathbf{\bar{G}}_\text{r}^\text{BB})\tilde{\Theta}_1,
\]
\[
\mathbf{T}_2 :=  (\mathbf{F}_\text{r}^\text{RF}\mathbf{G}_\text{r}^\text{BB}\tilde{\mathbf{H}}_1)^H\mathbf{F}_\text{r}^\text{RF}\mathbf{G}_\text{r}^\text{BB}\tilde{\mathbf{H}}_1 + \Tr(\tilde{\Phi}_1(\mathbf{F}_\text{r}^\text{RF}\mathbf{G}_\text{r}^\text{BB})^H\mathbf{F}_\text{r}^\text{RF}\mathbf{G}_\text{r}^\text{BB})\tilde{\Theta}_1,
\]
\[
\mathbf{B}:=\mathbf{\tilde{\bar{H}}}_2^H(\mathbf{\bar{W}}_\text{d}^\text{MMSE})^H\mathbf{V}\mathbf{\bar{W}}_\text{d}^\text{MMSE}\mathbf{\tilde{\bar{H}}}_2+\Tr(\tilde{\Phi}_2(\mathbf{\bar{W}}_\text{d}^\text{MMSE})^H\mathbf{V}\mathbf{\bar{W}}_\text{d}^\text{MMSE})\tilde{\Theta}_2.
\]

Based on the above modifications, our robust baseband design for $\mathbf{\bar{G}}_\text{r}^\text{BB}$ and $\mathbf{\bar{F}}_\text{t}^\text{BB}$ is as follows:
\begin{enumerate}
\item Calculate the MMSE receiver {$\mathbf{\bar{W}}_\text{d}^\text{MMSE}$} in Eq. (\ref{robwdmmse}) and the MMSE matrix {$\E_{\Delta_1,\Delta_2}[\mathbf{E}_\text{MMSE}]$} in Eq. (\ref{robmmse}).
\item Update {$\mathbf{\bar{V}}$} by setting  {$\mathbf{\bar{V}}=\frac{\E_{\Delta_1,\Delta_2}[\mathbf{E}_\text{MMSE}]^{-1}}{\log2}$}.
\item Fix {$\mathbf{\bar{V}}$} and {$\mathbf{\bar{F}}_\text{t}^\text{BB}$}, then we find {$\mathbf{\bar{G}}_\text{r}^\text{BB}$} that minimizes {$\Tr(\mathbf{\bar{V}}\E_{\Delta_1,\Delta_2}[\mathbf{E}_\text{MMSE}])$} under the power constraints. The solution is given by Eq.(\ref{robGrBB}).
\item Fix {$\mathbf{\bar{V}}$} and {$\mathbf{\bar{G}}_\text{r}^\text{BB}$}, then we find {$\mathbf{\bar{F}}_\text{t}^\text{BB}$} that minimizes {$\Tr(\mathbf{\bar{V}}\E_{\Delta_1,\Delta_2}[\mathbf{E}_\text{MMSE}])$} under the power constraints. The solution is given by Eq.(\ref{robFtBB}).
\end{enumerate}
After we obtain $\mathbf{\bar{G}}_\text{r}^\text{BB}$ and $\mathbf{\bar{F}}_\text{t}^\text{BB}$, we will use MMSE-SIC to design $\mathbf{\bar{W}}_\text{d}^\text{BB}$, which is the same as what we did in Section III.

\section{Simulation Results}
\subsection{non-robust case}
In this section, we consider a relay MIMO system consisting of one source node equipped with a $\text{N}_\text{t} = 64$ antenna array, a relay node with an $\text{N}_\text{r} = 32$ antenna array and a destination node with a $\text{N}_\text{d}=48$ antenna array unless other number of antennas are specifically mentioned. The number of antennas is chosen from \cite{xue2018relay} for the purpose of the comparison. For simplicity, we use the channel model in Eq. (\ref{5.1}) for channel realization. Due to the limited scattering characteristic of the mmWave channels, the number of paths should be less than the number of relay antennas. Here, we assume each channel has $\text{L}=20$ paths. The $\varphi_l$ of each path is assumed to be uniformly distributed in $[0, 2\pi]$. The results are averaged over $2000$ channel realizations. The $\text{SNR}$ of the source-to-relay link and the relay-to-destination are assumed to be the same. In the simulation, we calculate the variances of AWGN noises $\sigma_1$ and $\sigma_2$ according to the source power and the relay power to maintain the same $\text{SNR}$.
\begin{figure}[htbp]
\centering
\begin{minipage}{.4\textwidth}
\vspace*{-0.2cm}
\includegraphics[scale = 0.4]{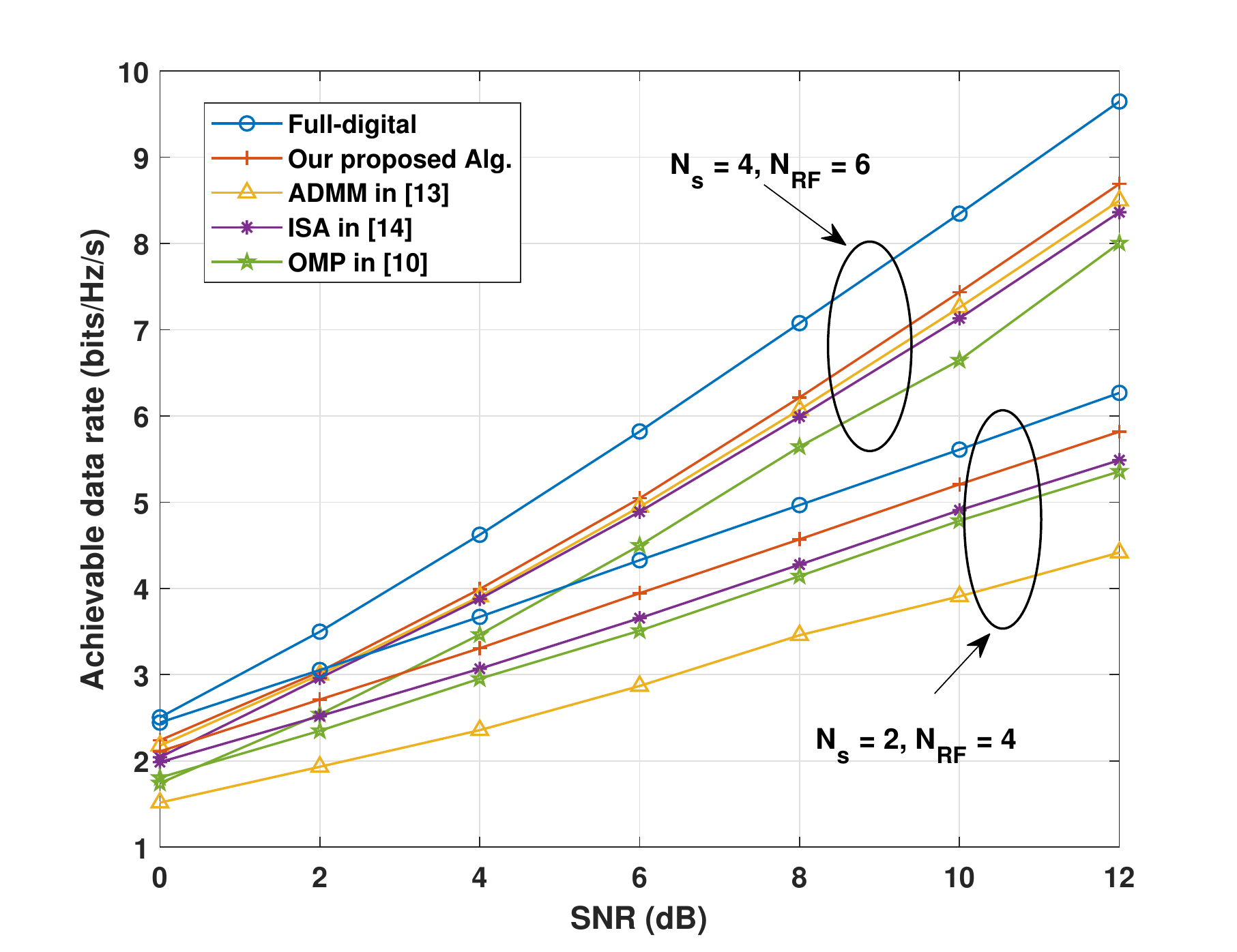}
\end{minipage}
\hspace{-0.1 cm}
\begin{minipage}{.4\textwidth}
\vspace*{-0.2cm}
\includegraphics[scale = 0.4]{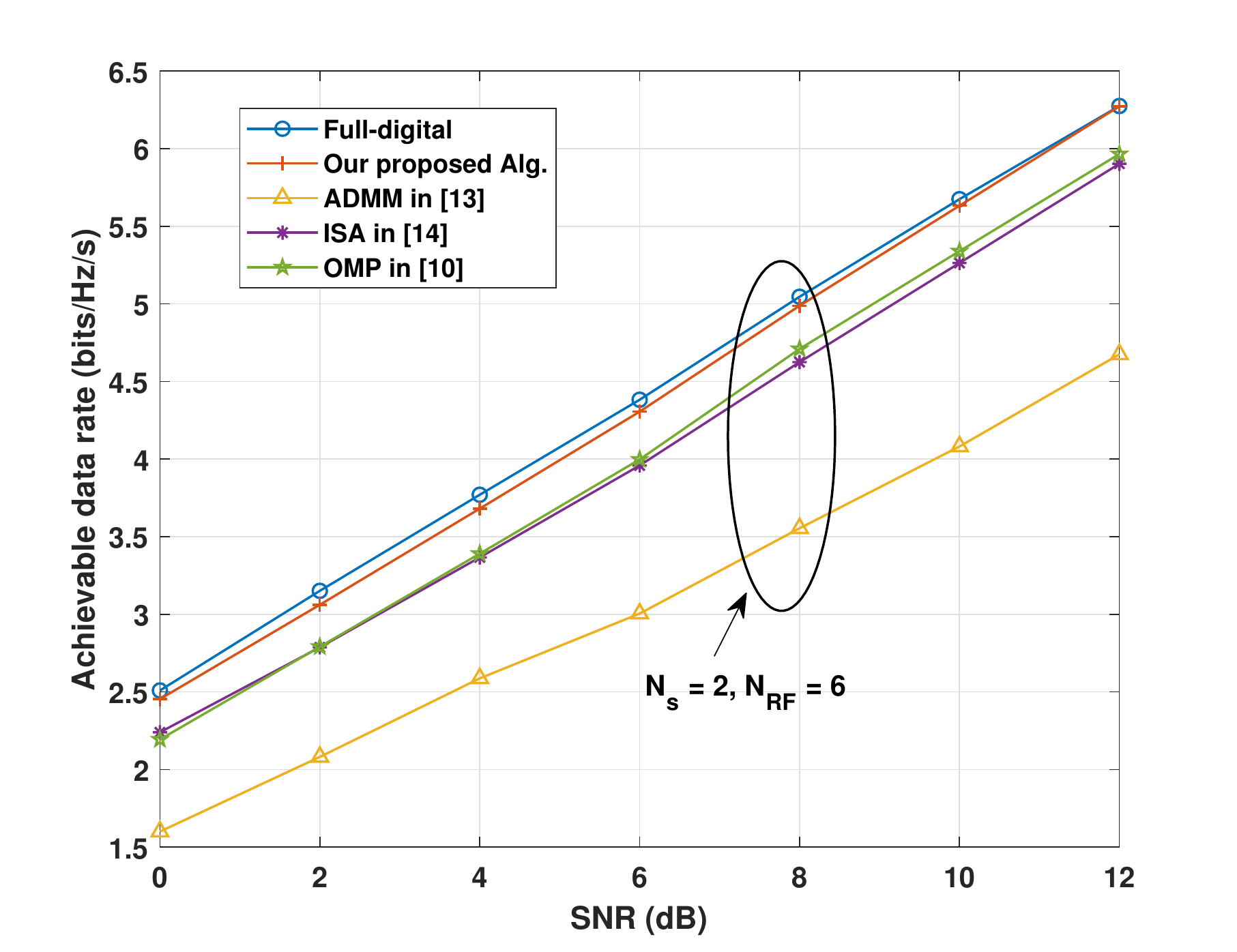}
\end{minipage}
\caption{ Achievable rate comparison with $64\times 32\times 48$ when $E_s=E_r=\text{N}_\text{s}$ }
\label{rate1}
\end{figure}
\begin{figure}[htbp]
\centering
\begin{minipage}{.4\textwidth}
\vspace*{-0.2cm}
\includegraphics[scale = 0.4]{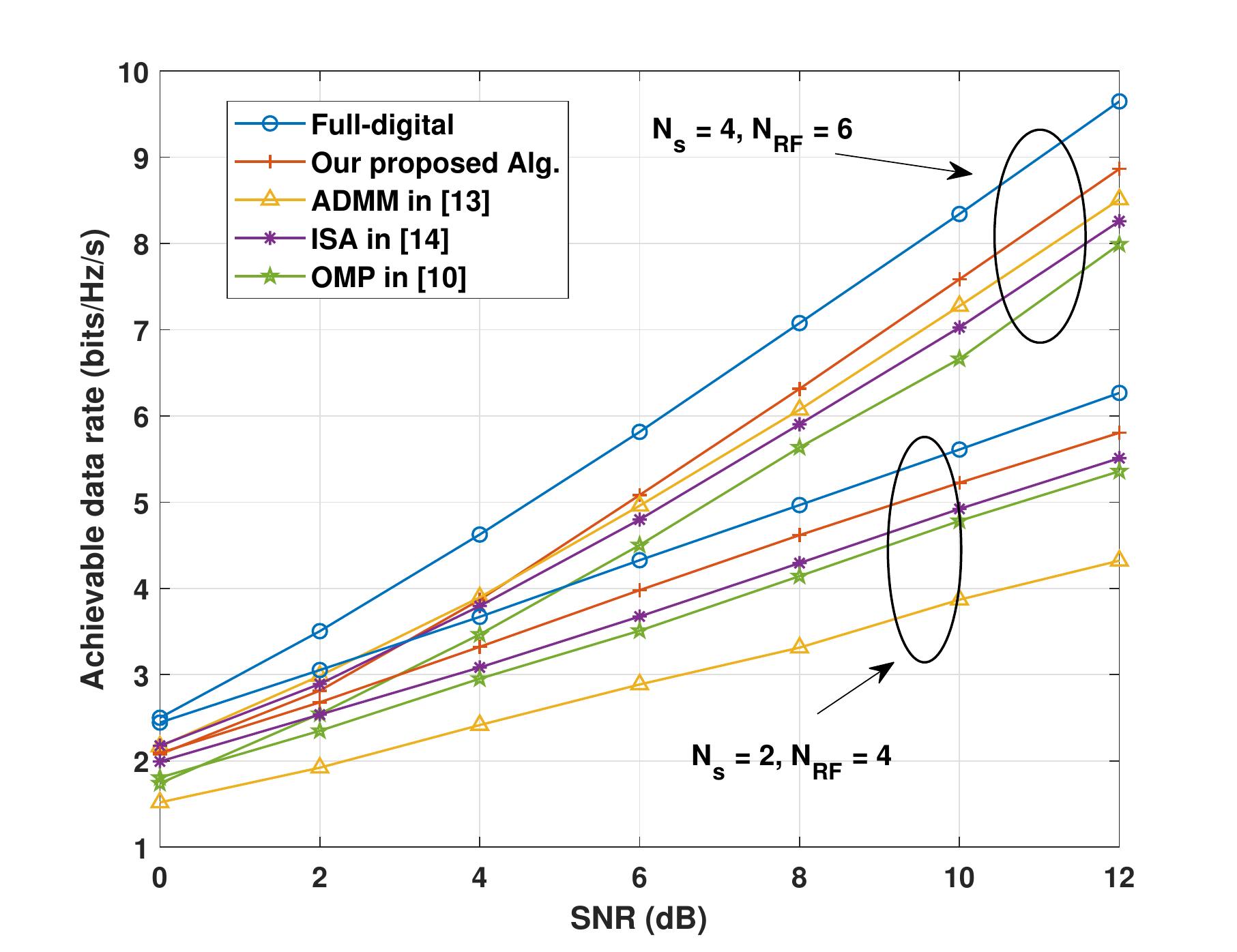}
\end{minipage}
\hspace{-0.1 cm}
\begin{minipage}{.4\textwidth}
\vspace*{-0.2cm}
\includegraphics[scale = 0.4]{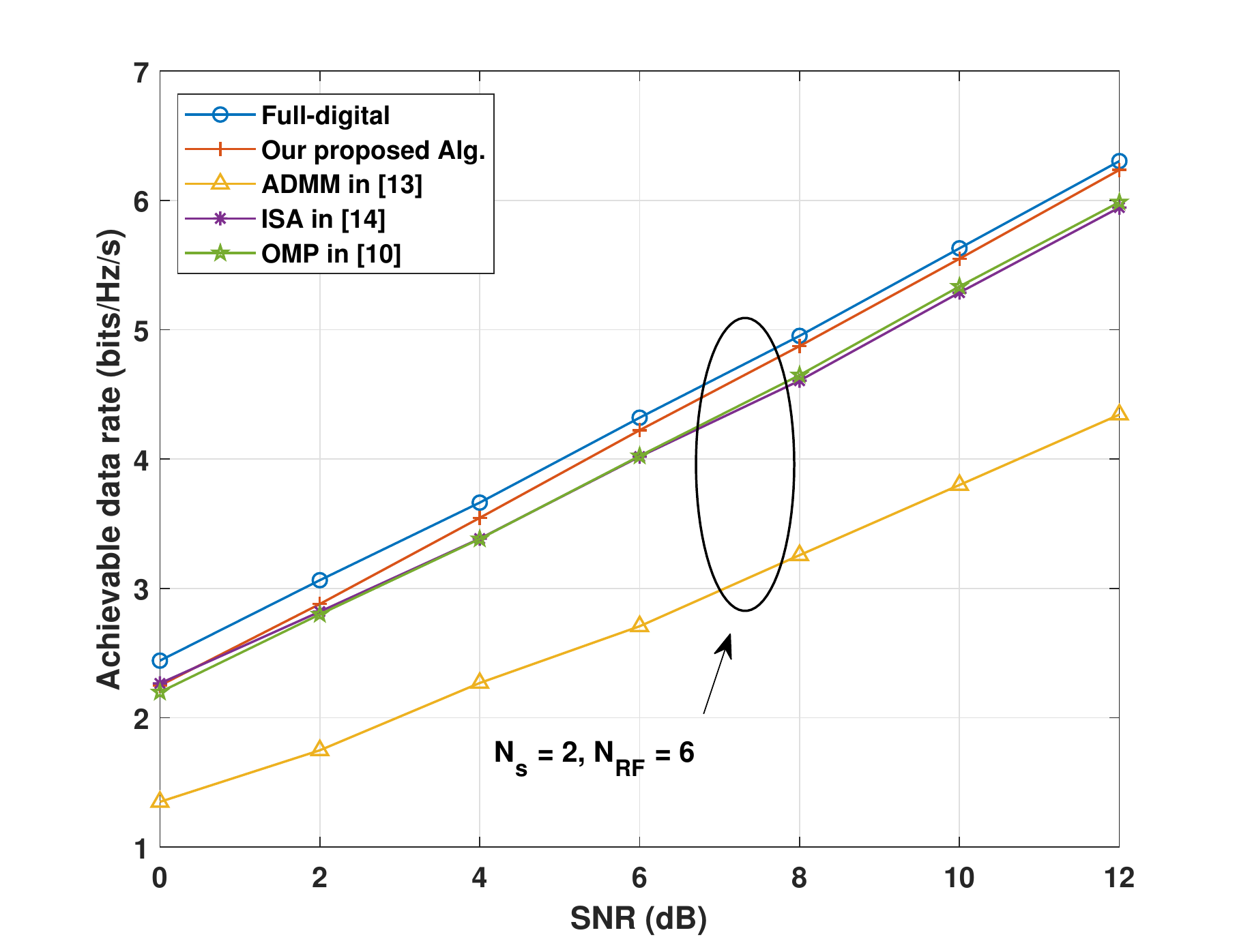}
\end{minipage}
\caption{ Achievable rate comparison with $64\times 32\times 48$ when $E_s = 2E_r=2\text{N}_\text{s}$}
\label{rate_P}
\end{figure}

In Fig. \ref{rate1}, we equally set the power of source node and the relay node, all to be $\text{N}_\text{s}$. We compare our algorithm with the ADMM in \cite{tsinos2018hybrid}, the ISA in \cite{xue2018relay} and the OMP in \cite{single_relay_mmwave} in terms of the achievable data rate. We use three scenarios: i) the number of data streams is $\text{N}_\text{s} = 4$ and the number of RF chains is $\text{N}_\text{RF} = 6$; ii) the number of data streams is $\text{N}_\text{s} = 2$ and the number of RF chains is $\text{N}_\text{RF} = 4$; iii) the number of data streams is $\text{N}_\text{s} = 2$ and the number of RF chains is $\text{N}_\text{RF} = 6$. The full-digital method is used as a benchmark, where we use the singular matrices of $\mathbf{H}_1$ and $\mathbf{H}_2$ as the precoding/combing matrices. When $\text{N}_\text{s}=4$, our algorithm outperforms ADMM by 2$\%$, ISA by 4$\%$ and OMP by 9$\%$ at $\text{SNR}=12~\text{dB}$. When $\text{N}_\text{s}=2$ and $\text{N}_\text{RF}=4$, our algorithm outperforms ADMM by 32$\%$, ISA by 6$\%$ and OMP by 9$\%$ at $\text{SNR}=12~\text{dB}$. When $\text{N}_\text{s}=2$ and $\text{N}_\text{RF}=6$, our algorithm outperforms ADMM by 34$\%$, ISA by 6 $\%$ and OMP by 5$\%$ at $\text{SNR}=12~\text{dB}$.

In Fig. \ref{rate_P}, we set $\text{E}_\text{s}=2\text{E}_\text{r}=2\text{N}_\text{s}$. Our proposed algorithm outperforms the other three methods in three scenarios.  When $\text{N}_\text{s}=4$, our algorithm can provide 4$\%$, 7$\%$ and 11$\%$ gains over ADMM, ISA and OMP, respectively, at $SNR=12~dB$. When $\text{N}_\text{s}=2$ and $\text{N}_\text{RF}=4$, our algorithm can provide a gain of 34$\%$ over ADMM, 5$\%$ over ISA and 8$\%$ over OMP at $\text{SNR}=12~\text{dB}$. When $\text{N}_\text{s}=2$ and $\text{N}_\text{RF}=6$, our algorithm can provide a gain of 44$\%$ over ADMM, 5$\%$ over ISA and 4$\%$ over OMP at $\text{SNR}=12~\text{dB}$.

\begin{figure}[htbp]
\centering
\includegraphics[width=10cm]{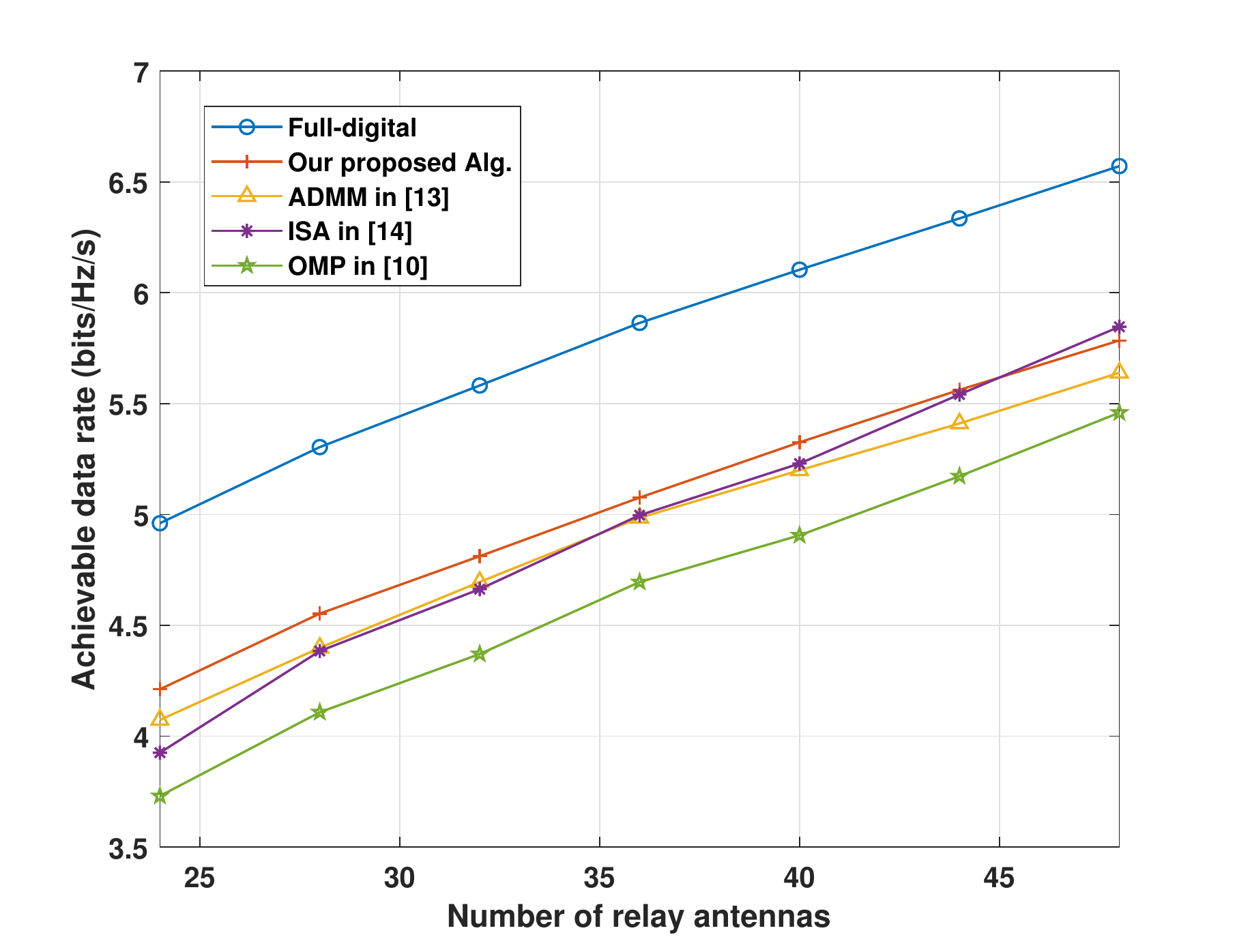}
\caption{ Achievable rate comparison with different relay antennas when $\text{N}_\text{s}=4$, $\text{N}_\text{RF}=6$ and $\text{SNR} = 5 \text{dB}$}
\label{rate2}
\vspace*{-1cm}
\end{figure}
\begin{figure}[htbp]
\centering
\includegraphics[width=10cm]{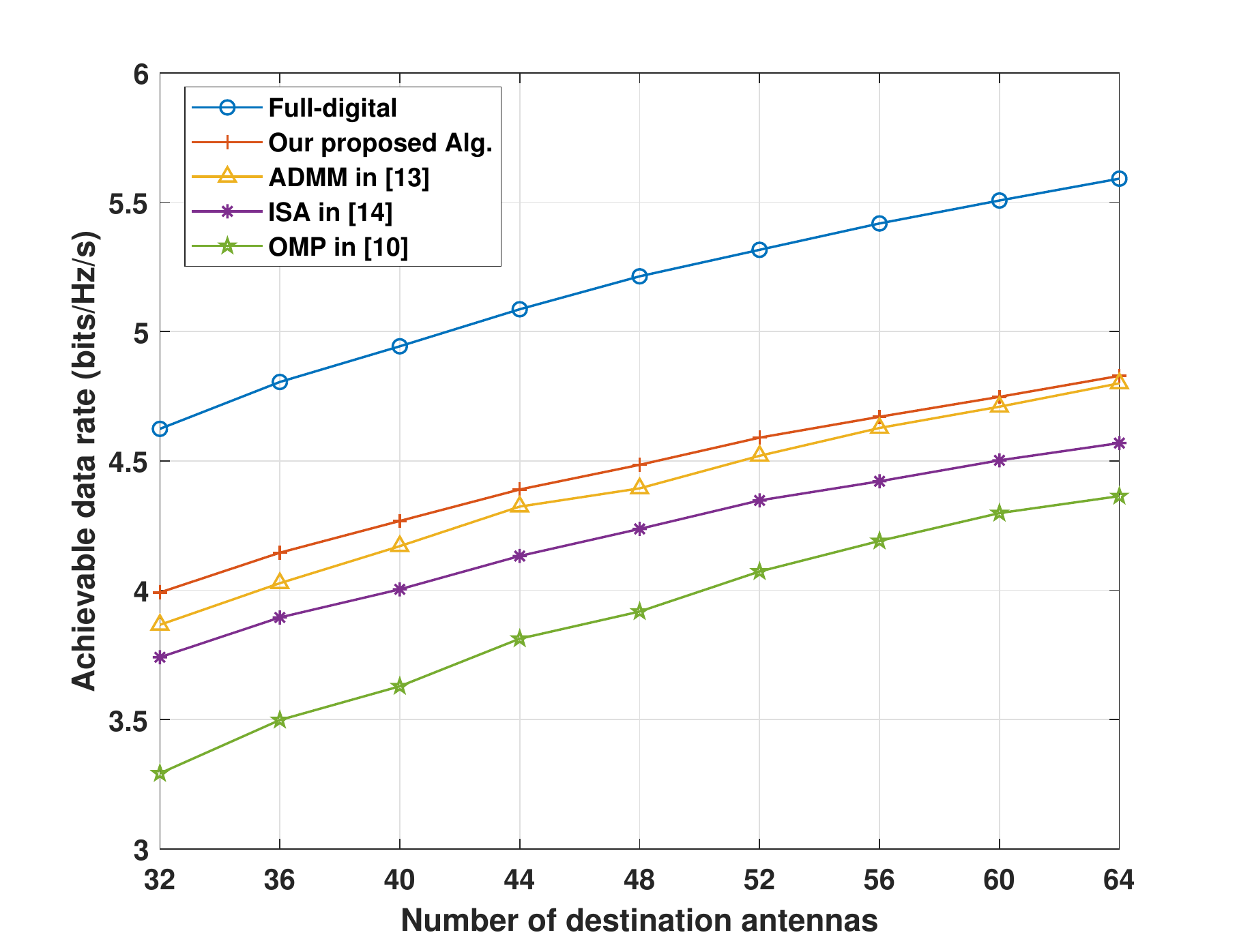}
\caption{ Achievable rate comparison with different destination antennas when $\text{N}_\text{s}=4$, $\text{N}_\text{RF}=6$ and $\text{SNR} = 5 ~\text{dB}$}
\label{rate_d}
\end{figure}

Fig. \ref{rate2} compares the achievable rate of different algorithms for different number of relay antennas when $\text{N}_\text{s}=4$, $\text{N}_\text{RF}=6$ and $\text{SNR} = 5~\text{dB}$.  The full-digital method is used as a benchmark. As expected, when the number of antennas at the relay node increases, the performance of all different algorithms improves because of the additional antenna gain. Our proposed method has the best achievable rate performance among the four methods except for $\text{N}_\text{r} = 48$. When $\text{N}_\text{r}=48$, ISA has the highest achievable rate among the four methods. However, as the number of antennas at the relay node increases, the complexity of the ISA increases greatly, which will lead to a high power consumption. 

Fig. \ref{rate_d} compares the achievable rate for different number of antennas at the destination node when $\text{N}_\text{s}=4$, $\text{N}_\text{RF}=6$ and $\text{SNR} = 5~\text{dB}$.  Similar to Fig. \ref{rate2}, when the number of antennas at the destination node increases, the performance of all different algorithms improves because of the additional antenna gain. Our proposed method has the best achievable rate performance among the four methods.
\begin{figure}[htbp]
\vspace*{-2cm}
\centering
\includegraphics[width=10cm]{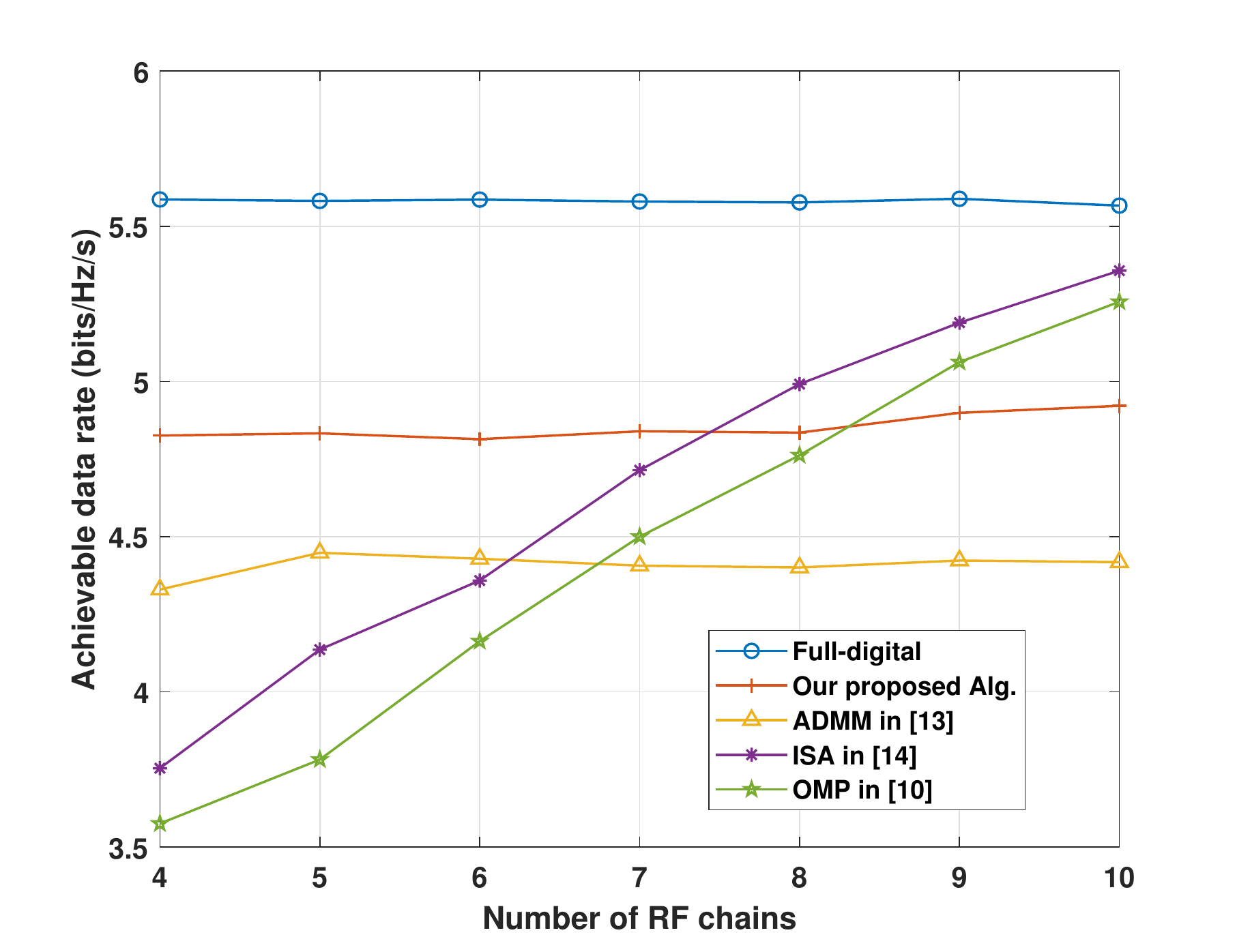}
\caption{ Achievable rate comparison with different RF chains when $\text{N}_\text{s}=4$ and $\text{SNR} = 5 ~\text{dB}$ using channel model (\ref{5.1})}
\label{rate3}
\end{figure}
\begin{figure}[htbp]
\vspace*{-1cm}
\centering
\includegraphics[width=10cm]{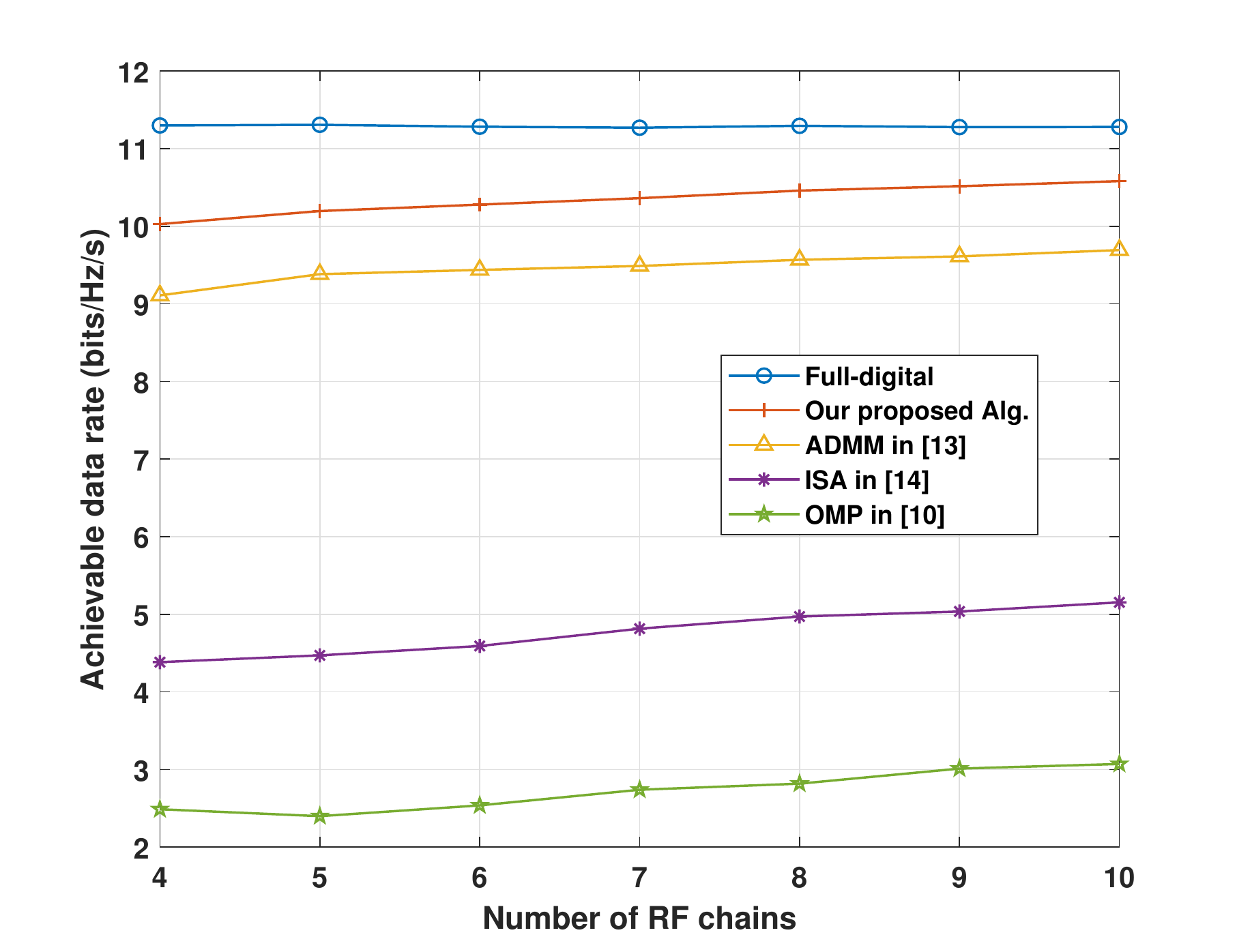}
\caption{ Achievable rate comparison with different RF chains when $\text{N}_\text{s}=4$ and $SNR = 5 dB$ using channel model (\ref{5})}
\label{rate4}
\vspace*{-1cm}
\end{figure}

Fig. \ref{rate3} compares the achievable rate among the four methods for different number of RF chains when $\text{N}_\text{s}=4$ and $\text{SNR} = 5~\text{dB}$. Since our proposed method is designed to maximize the mutual information between the destination node and the source node after RF precoding/combining, the gap between our method and the full-digital method is more-or-less fixed, which is caused by the analog processing. However, ISA and OMP are approximation algorithms jointly iterating between the RF and the baseband. Therefore, as the number of RF chains increases, the performance improves. When the number of RF chains is larger than 8,  ISA and OMP will outperform our proposed algorithm. However, larger number of RF chains leads to higher complexity and more power consumption. Also, the performance of the approximation algorithms depends on the limited scattering characteristic of the channel. The more sparse the channel is, the better performance the approximation algorithms achieve. In Fig. \ref{rate3}, we use the highly limited scattering channel model in  (\ref{5.1}), where each scatter only contributes to one path, thus the approximation algorithms have good performance. If we use the general channel model in (\ref{5}), the performance of approximation algorithms degrades greatly as shown in Fig. \ref{rate4}. In  Fig. \ref{rate4}, we set the number of propagation paths $\text{N}_\text{cl}$ in each scatter to be $2$ and the number of scatters $L$ to be $20$.  In this case, the performance of ISA and OMP falls far behind our proposed algorithm.

Fig. \ref{convergence} shows the convergence performance of different algorithms with respect to the number of iterations. In our algorithm, we update the WMMSE matrix, the digital relay matrix and the digital precoding matrix sequentially in each iteration. In ISA,  the digital relay filter, the analog relay receiver and the analog relay precoder are updated sequentially in each iteration. In ADMM, the source node, the relay node and the destination node are optimized alternatively in each iteration. In Fig. \ref{convergence}, our algorithm has the fastest convergence rate while ADMM has the slowest convergence rate. Moreover, our algorithm has much lower complexity in each iteration compared with ISA. ISA needs to solve three optimization sub-problems, and in each sub-problem it needs to solve an optimization problem through an iterative method. In our algorithm, we have closed-form solutions for each step. In addition, since we preform the baseband processing after the RF processing, the matrix dimensions are greatly reduced compared to ISA. 

\begin{figure}[htbp]
\centering
\includegraphics[width=10cm]{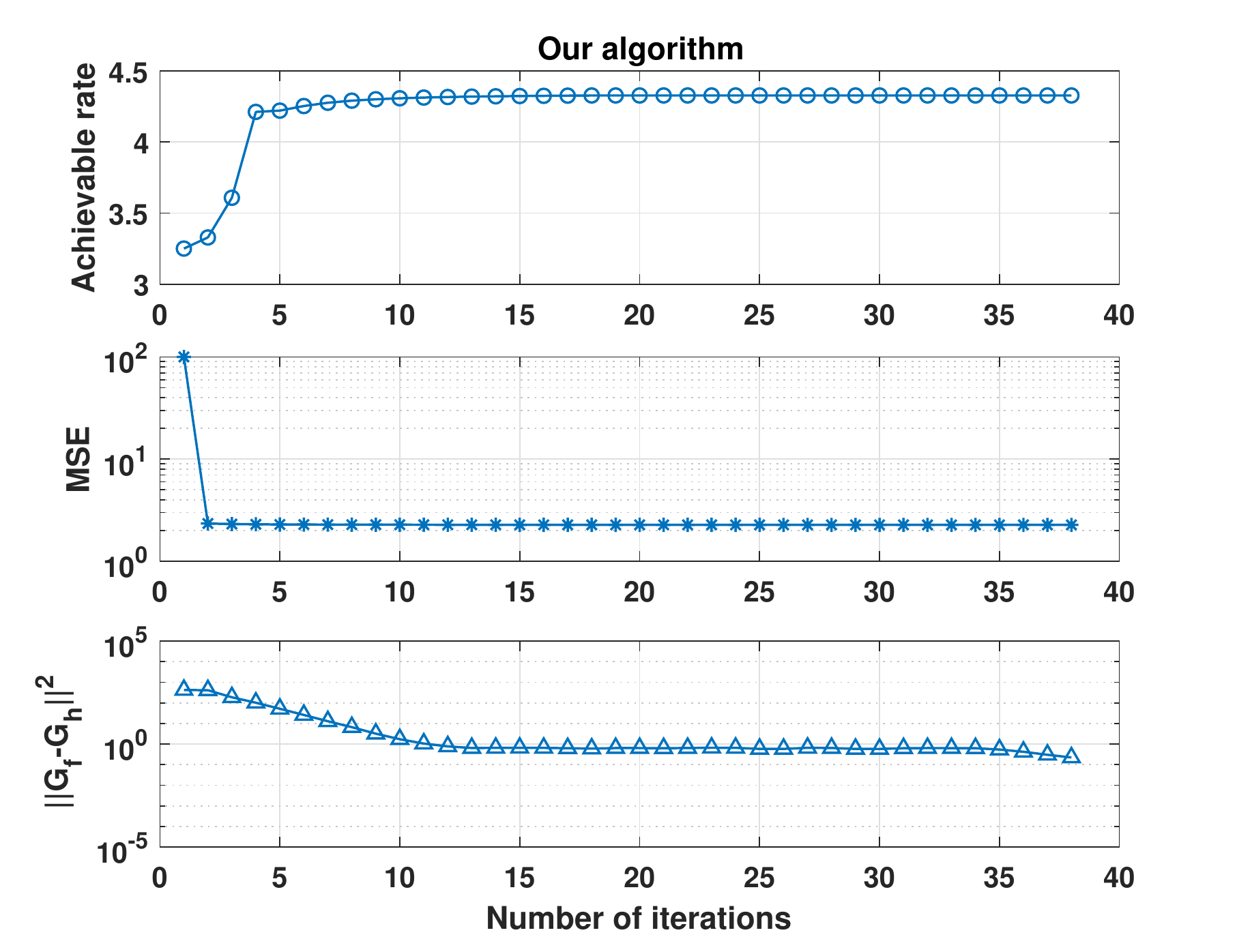}
\caption{Convergence rate comparison}
\label{convergence}
\end{figure}
\begin{figure}[htbp]
\centering
\includegraphics[width=10cm]{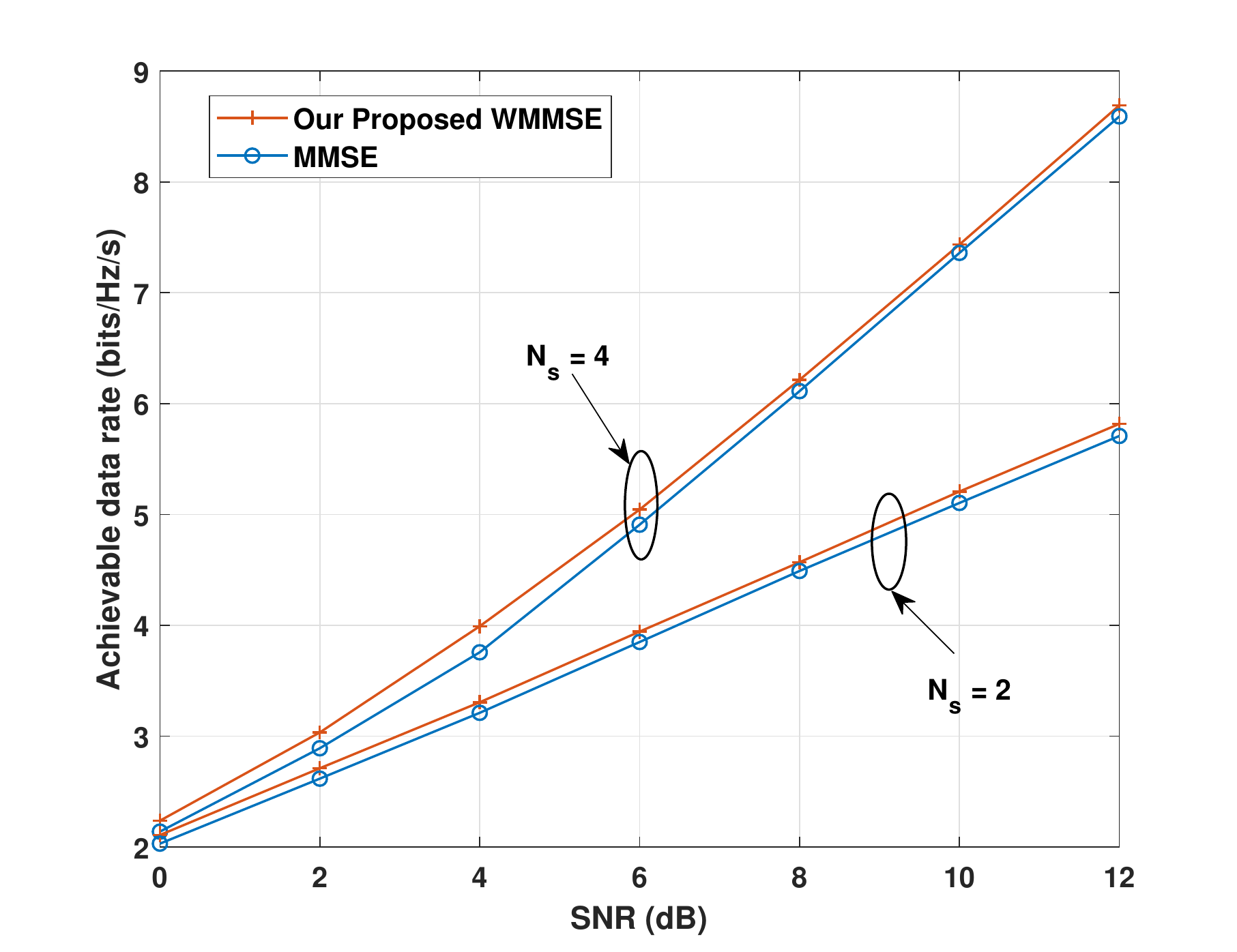}
\caption{Baseband algorithms comparison}
\label{bb_comp}
\end{figure}
Fig. \ref{bb_comp} compares the baseband processing algorithms. Note that we apply the MMSE algorithm only on the baseband, i.e., on the $\mathbf{\tilde{H}}_1$ and $\mathbf{\tilde{H}}_2$. Our proposed WMMSE algorithm outperforms the MMSE algorithm in terms of the achievable data rate since we optimized the mutual information $I(\bf{s},\bf{y}_d)$. In fact, if we set our weight matrix to be the identity matrix, our algorithm degenerates to the MMSE algorithm. Therefore, the MMSE algorithm can be considered as a special case of our proposed WMMSE algorithm and our algorithm strictly performs better than MMSE.

\subsection{Robust case}

\begin{figure}[htbp]
\centering
\includegraphics[width=10cm]{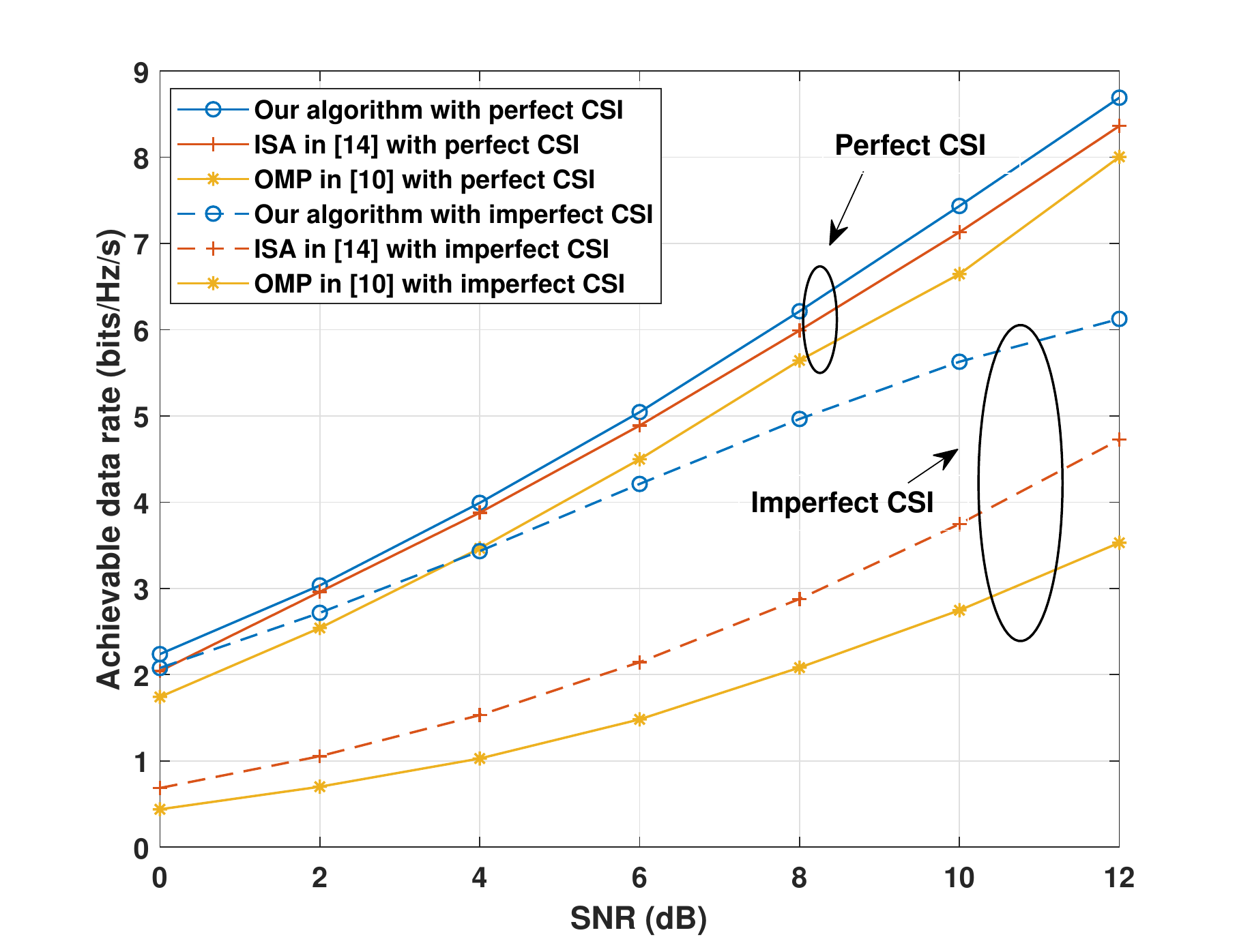}
\caption{The effect of imperfect CSI when the error covariance $\sigma_e^2 = 0.1$, $\alpha=0.6$ and $\beta=0.4$}
\label{imperfect_CSI}
\end{figure}
\begin{figure}[htbp]
\centering
\includegraphics[width=10cm]{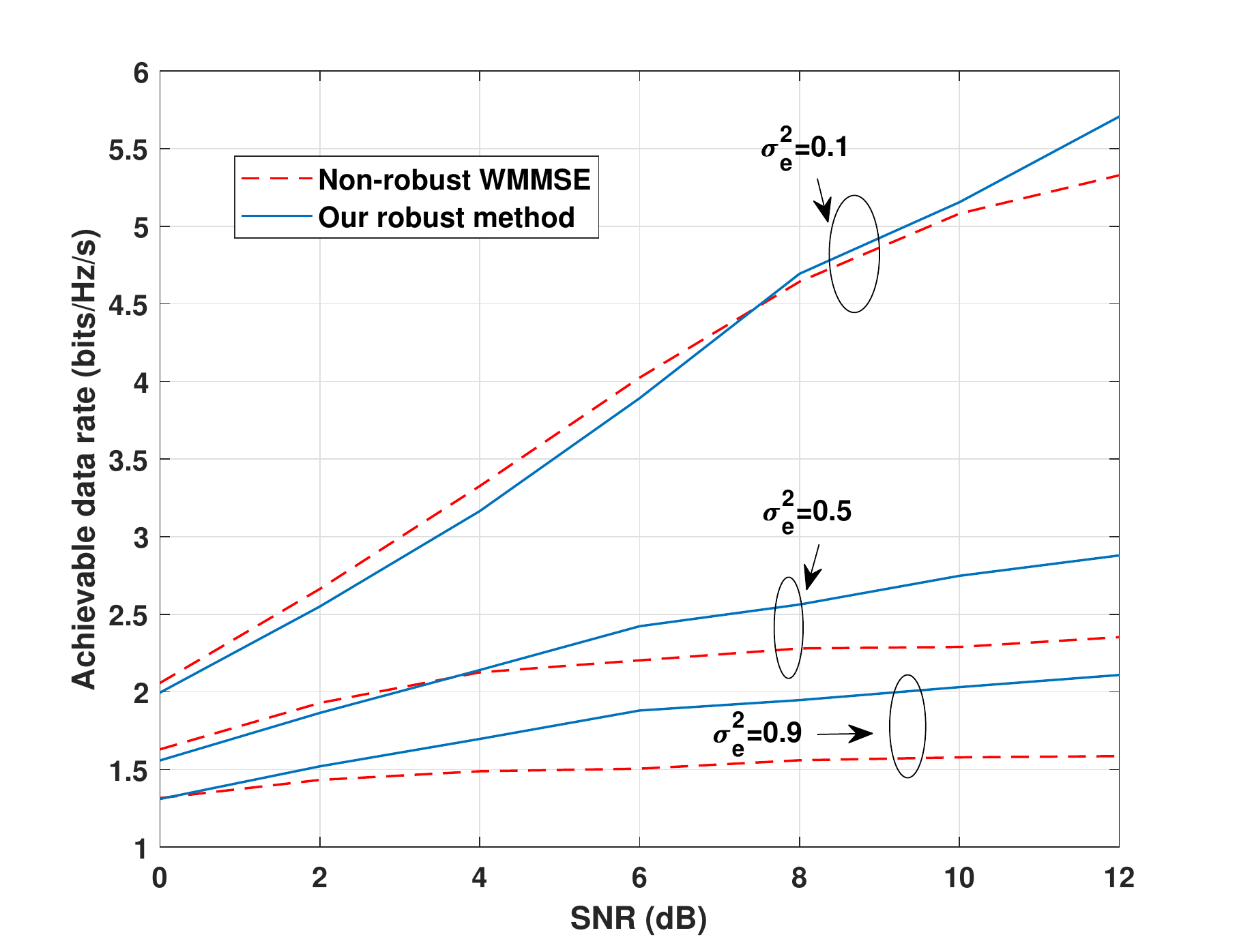}
\caption{Achievable rate comparison when $\alpha=0$, $\beta=0$}
\label{rob1}
\end{figure}
\begin{figure}[htbp]
\centering
\includegraphics[width=10cm]{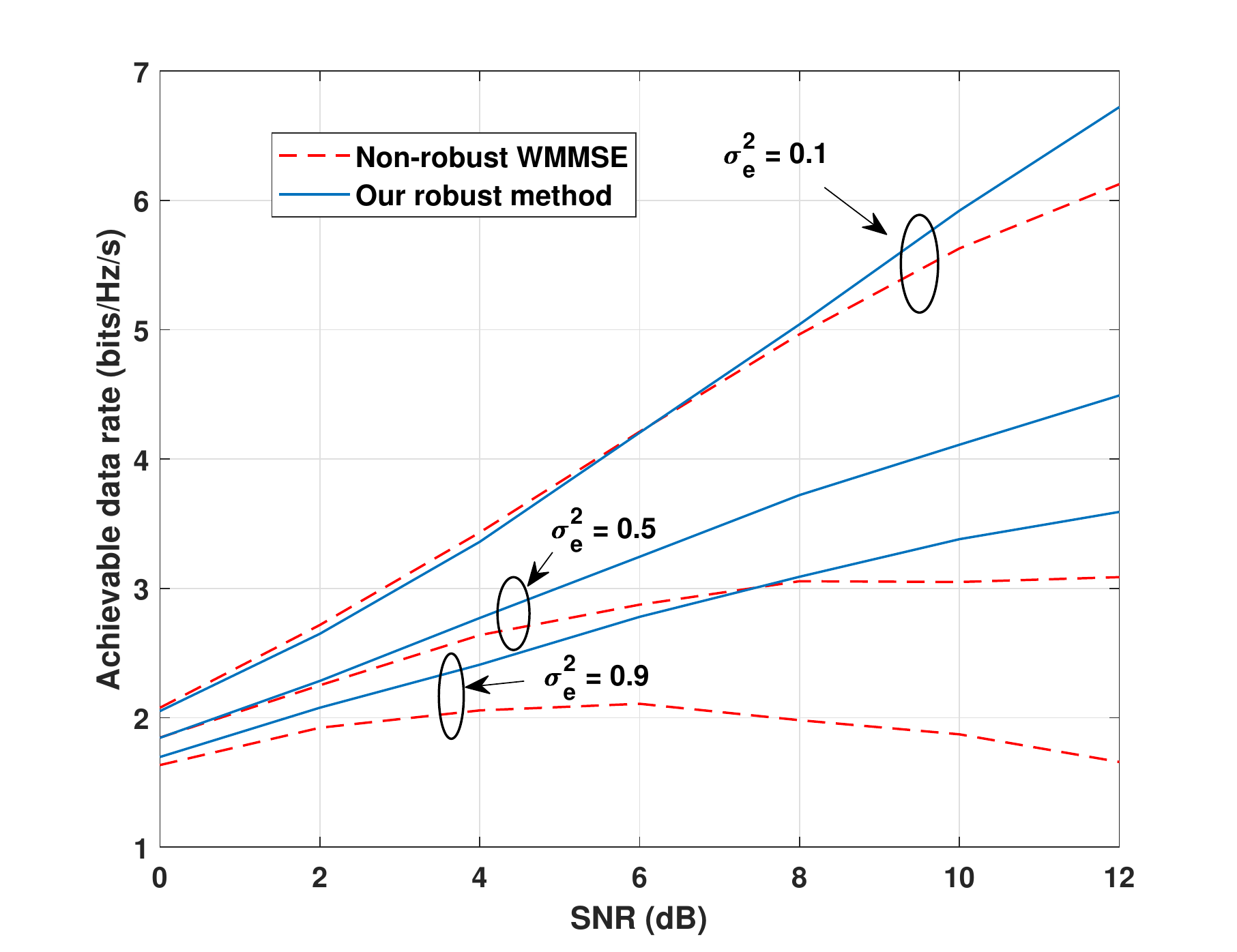}
\caption{Achievable rate comparison when $\alpha=0.5$, $\beta=0.5$}
\label{rob2}
\end{figure}
\begin{figure}[htbp]
\centering
\includegraphics[width=10cm]{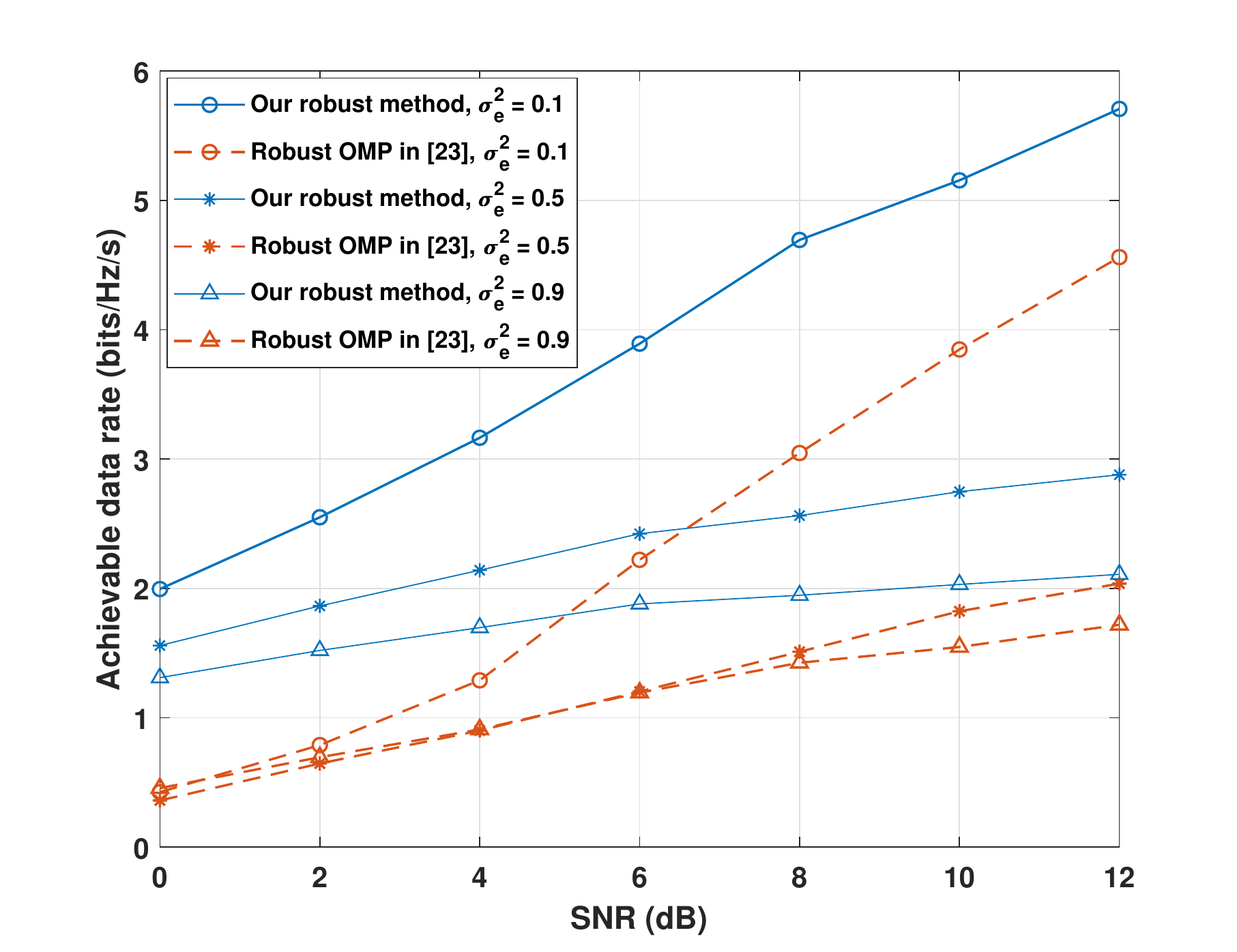}
\caption{Achievable rate comparison with the OMP algorithm in \cite{luo2018robust} when $\alpha=0$, $\beta=0$}
\label{rob_com}
\end{figure}

As we described in Section \uppercase\expandafter{\romannumeral4}, we adopt the channel estimation error model from \cite{zhang2008statistically,rong2011robust, xing2010robust} where the entries of the correlation matrices are selected as $\Phi_1(i,j)=\sigma_{e,1}^2\beta_1^{|i-j|}$, $\Theta_1(i,j) = \alpha_1^{|i-j|}$, $\Phi_2(i,j)=\sigma_{e,2}^2\beta_2^{|i-j|}$ and $\Theta_2(i,j) = \alpha_2^{|i-j|}$. Parameters $\alpha_1$, $\beta_1$, $\alpha_2$ and $\beta_2$ are the correlation coefficients and $\sigma_{e,1}^2$ and $\sigma_{e,2}^2$ denote the estimation error covariance. For simplicity, we assume $\alpha_1=\alpha_2=\alpha$, $\beta_1=\beta_2=\beta$ and $\sigma_{e,1}^2=\sigma_{e,2}^2=\sigma_e^2$. The antenna settings are the same as the non-robust part and the number of scatters is set to be $20$. The actual channels $\mathbf{H}_1$ and $\mathbf{H}_2$ are generated based on sparse channel model (\ref{5.1}) and the estimated channels are generated by $\mathbf{\bar{H}}_1=\mathbf{H}_1-\Phi_1^{\frac{1}{2}}\Delta_1\Theta_1^{\frac{1}{2}}$ and $\mathbf{\bar{H}}_2=\mathbf{H}_2-\Phi_2^{\frac{1}{2}}\Delta_2\Theta_2^{\frac{1}{2}}$. 

Fig. \ref{imperfect_CSI} shows the effects of the channel estimation error. We provide the performance of our algorithm and those of \cite{single_relay_mmwave,xue2018relay}. For this simulation, we have chosen $\sigma^2=0.1$, $\alpha=0.6$ and $\beta=0.4$. As shown in Fig. \ref{imperfect_CSI}, the imperfect channel information will result in severe performance degradation. The achievable data rate of \cite{single_relay_mmwave,xue2018relay} can be decreased to half of what it is for the perfect CSI.

The achievable data rate performances of the proposed robust scheme with various antenna covariance values are depicted in Figs. \ref{rob1} and \ref{rob2}. When $SNR$ is low, the estimation error can be neglected compared to the noise, therefore the non-robust algorithm achieves good performance which can be even better than that of the robust algorithm. When $SNR$ goes up, the performance of the non-robust algorithm starts to degrade. In Fig. \ref{rob2}, the performance becomes worse than that of the low $\text{SNR}$ region for large $\sigma_e^2$. Meanwhile, the proposed robust design offers significant gain considering various $\sigma_e^2$, which demonstrates the effectiveness of the modified robust transceiver optimization.

In Fig. \ref{rob_com}, we compare our robust algorithm with the OMP algorithm in \cite{luo2018robust}. We set $\alpha=0$ and $\beta=0$ for simplicity. The proposed robust design provides a large gain over the algorithm in \cite{luo2018robust} in all three $\sigma_e^2$ settings, showing the advantage of our algorithm. 

\section{Conclusion}
In this paper, we considered mmWave AF relay networks in the domain of massive MIMO. We designed the hybrid precoding/combining matrices for the source node, the relay node, and the destination node. We first performed the RF processing to decompose the channel into parallel sub-channels by compensating the phase of each eigenmode of the channel. Given the RF processing matrices, we designed the baseband matrices to maximize the mutual information.  The baseband processing is divided into two parts. We first jointly designed the source node and the relay node by making use of the equivalence between maximizing the mutual information and the WMMSE. Given the optimal baseband source and relay filters, we implemented MMSE-SIC for baseband destination node to obtain the maximal mutual information. In addition, a robust hybrid precoding/combining design was proposed for the imperfect CSI. Simulation results show that our algorithm achieves better performance with lower complexity compared with other algorithms in the literature.

\ifCLASSOPTIONcaptionsoff
  \newpage
\fi
\small
\bibliographystyle{IEEEtran}
\bibliography{IEEEabrv,refs}

\end{document}